\newlength{\proofpostskipamount}\newlength{\proofpreskipamount}
\renewenvironment{proof}%
               {\par\vspace{\proofpreskipamount}\noindent{\bf Proof:}\hspace{0.5em}}% 0.5 before
               {\nopagebreak%
                \strut\nopagebreak%
                \hspace{\fill}\qed\par\vspace{\proofpostskipamount}\noindent}
\newcommand{\set}[2]{\{#1\, ; \, #2\}}
\newcommand{\modulus}[1]{\vert #1 \vert}
\newcommand{\goods}{M}
\newcommand{\agents}{N}
\newcommand{\half}{\sfrac{1}{2}}
\newcommand{\OH}{O^H}
\newcommand{\nhalf}{n_{\sfrac{1}{2}}}
\newcommand{\mhalf}{m_{\sfrac{1}{2}}}
\newcommand{\Ahalf}{\SA_{\sfrac{1}{2}}}
\newcommand{\Ohalf}{\SO_{\sfrac{1}{2}}}
\newcommand{\htmladdnormallink}[1]{#1}
\newcommand{\NSW}{\text{NSW}}
\newcommand{\OPT}{\text{OPT}}
\newcommand{\instance}{\mathcal{I}}
\newcommand{\NN}{\ensuremath{\mathbb{N}}}
\newcommand{\RR}{\ensuremath{\mathbb{R}}}
\newcommand{\QQ}{\ensuremath{\mathbb{Q}}}
\newcommand{\classNP}{\textsf{NP}}
\newcommand{\classAPX}{\textsf{APX}}
\newcommand{\OPTH}{\OPT^H}
\newcommand{\EH}{E^H}
\newcommand{\EL}{E^L}
\newcommand{\AH}{A^H}
\newcommand{\CH}{C^H}
\newcommand{\uv}{v}
\newcommand{\abs}[1]{ \lvert #1 \rvert }
\newcommand{\ceil}[1]{\lceil #1 \rceil}
\newcommand{\floor}[1]{\lfloor #1 \rfloor}
\newtheorem{observation}{Observation}
\newcommand{\hdeg}{\mathit{hdeg}}
\newcommand{\sset}[1]{\{ #1 \}}
\newcommand{\assign}{\mathbin{\raisebox{0.05ex}{\mbox{\rm :}}\!\!=}}
\newcommand{\myhide}[1]{}
\begin{document}

\ARTICLEAUTHORS{
\AUTHOR{Hannaneh Akrami}
\AFF{Max Planck Institute for Informatics, Saarbr\"ucken, and Hertz Chair for Algorithms and Optimization, Bonn University, \EMAIL{hannaneh.akrami95@gmail.com}}
\AUTHOR{Bhaskar Ray Chaudhury}
\AFF{University of Illinois, Urbana-Champaign, Department of Computer Science and Department of Industrial and Enterprise Systems Engineering, \EMAIL{braycha@illinois.edu}}
\AUTHOR{Martin Hoefer}
\AFF{RWTH Aachen University, Department of Computer Science, \EMAIL{mhoefer@cs.rwth-aachen.de}}
\AUTHOR{Kurt Mehlhorn}
\AFF{Max Planck Institute for Informatics and Universit\"at des Saarlandes, \EMAIL{mehlhorn@mpi-inf.mpg.de}}
\AUTHOR{Marco Schmalhofer}
\AFF{Goethe University Frankfurt, Institute for Computer Science, \EMAIL{schmalhofer@em.uni-frankfurt.de}}
\AUTHOR{Golnoosh Shahkarami}
\AFF{Max Planck Institute for Informatics and Saarbr\"ucken Graduate School of Computer Science, Universit\"at des Saarlandes, \EMAIL{gshahkar@mpi-inf.mpg.de}}
\AUTHOR{Giovanna Varricchio}
\AFF{University of Calabria, Department of Mathematics and Computer Science, \EMAIL{giovanna.varricchio@unical.it}}
\AUTHOR{Quentin Vermande}
\AFF{Inria Centre at Université C\^{o}te d'Azur, \EMAIL{qvermande@phare.normalesup.org}}
\AUTHOR{Ernest van Wijland}
\AFF{IRIF, Université Paris-Cit\'{e}, \EMAIL{ernest.vanwijland@irif.fr}}
}

\RUNAUTHOR{Akrami et al.}
\RUNTITLE{Maximizing Nash Social Welfare in 2-Value Instances}
\TITLE{Maximizing Nash Social Welfare in 2-Value Instances:\\
Delineating Tractability}

\ABSTRACT{
We study the problem of allocating a set of indivisible goods among a set of agents with \emph{2-value additive valuations}. In this setting, each good is valued either $1$ or $\sfrac{p}{q}$ for some fixed co-prime numbers $p,q\in \NN$ such that $1\leq q < p$. Our goal is to find an allocation that maximizes the \emph{Nash social welfare} (\NSW), i.e.,  the geometric mean of the valuations of the agents. In this work, we give a complete characterization of polynomial-time tractability of \NSW\ maximization that solely depends on the value of $q$.

We start by providing a rather simple polynomial-time algorithm to find a maximum \NSW\ allocation when the valuation functions are \emph{integral}, that is, $q=1$. We then exploit more involved techniques to get an algorithm that produces a maximum \NSW\ allocation for the \emph{half-integral} case, that is, $q=2$.  Finally, we show it is \classNP-hard to compute an allocation with maximum \NSW\ whenever $q\geq3$.
}

\FUNDING{Martin Hoefer and Giovanna Varricchio were supported by DFG grant Ho 3831/5-1.}

\KEYWORDS{Nash Social Welfare, Game Theory}

\tableofcontents

\maketitle

%\tableofcontents

\section{Introduction}

Fair division of goods has developed into a fundamental field in economics and computer science. In a classical fair division problem, the goal is to allocate a set of goods among a set of agents in a \emph{fair} (making every agent satisfied with her bundle) and \emph{efficient} (achieving good overall welfare) manner. One of the most well-studied classes of valuation functions is the one of \emph{additive} valuation functions, where the utility of a bundle is defined as the sum of the utilities of the contained goods. When agents have additive valuation functions, Nash social welfare (NSW), or equivalently, the geometric mean of the valuations, is a direct indicator of the fairness and efficiency of an allocation. In particular, Caragiannis et al.~\cite{CaragiannisKMPSW19} show that any allocation that maximizes NSW is \emph{envy-free up to one good} (EF1), i.e.,  no agent envies another agent after the removal of \emph{some} single good from the other agent's bundle, and \emph{Pareto-optimal}, i.e., no allocation gives a single agent a better bundle without giving a worse bundle to some other agent.  Maximizing NSW is APX-hard. Also, allocations achieving good approximations of NSW may not have similar fairness and efficiency guarantees~\cite{Lee17}. Despite this, approximation algorithms are of great interest, and algorithms with small constant approximation factors have been obtained recently~\cite{AnariGSS17, BarmanKV18, ColeDGJMVY17, ColeG18}. The current best factor is $e^{1/e} \approx 1.445$, Barman et al.~\cite{BarmanKV18}. Their algorithm uses techniques inspired by competitive equilibria along with suitable rounding of valuations to guarantee polynomial running time.

While computing an allocation with maximum NSW is generally hard, it becomes computationally tractable when the agents have \emph{binary additive valuations}, i.e., when for each agent $i$ and each good $g$, we have $v_i(g) \in \sset{0,1}$, Barman et al.~\cite{BarmanKV18AAMAS}. Although this class of valuation functions may seem restrictive in its expressiveness, several real-world scenarios involve binary preferences, and, in fact, there is substantial research on fair division under binary valuations~\cite{AleksandrovAGW15, BarmanKV18AAMAS, BouveretL16, DarmannS15, FreemanSVX19, HalpernPPS20, suksompong2022maximum}. Furthermore, even for asymmetric agents, i.e., agents with different entitlements, a maximum NSW allocation satisfies strategyproofness, together with other interesting properties~\cite{HalpernPPS20,suksompong2022maximum}. 

A generalization of binary valuation functions are \emph{2-value} functions, where for each agent $i$ and each good $g$, we have $v_i(g) \in \sset{r,s}$ for some $r,s \in \QQ$. The case $r = s$ is trivial as every agent gives the same value to all the goods. Binary valuations are the special case  $r=0$ and $s=1$. Amanatidis et al.~\cite{AmanatidisBFHV21} show that for 2-value functions, an allocation with maximum NSW is \emph{envy-free up to any good} (EFX), where no agent envies another agent following the removal of \emph{any} single good from the other agent's bundle. The authors also present a polynomial-time algorithm that provides an EFX allocation for 2-value instances. However, the computed allocation is neither guaranteed to maximize NSW, nor to be Pareto optimal. Garg and Murhekar~\cite{GargM21} show how to obtain an EFX and PO allocation efficiently and also provide a $1.061$-approximation algorithm for the maximum \NSW\ in 2-value instances; this was improved to 1.0345 in the conference version of this paper~\cite{Akrami_Chaudhury_Hoefer_Mehlhorn_Schmalhofer_Shahkarami_Varricchio_Vermande_Wijland_2022}. 
Both Amanatidis et al.~\cite{AmanatidisBFHV21} and Garg and Murhekar~\cite{GargM21} left the problem of (exactly) maximizing NSW for 2-value instances open. For 3-value instances, maximizing NSW is NP-complete~\cite{AmanatidisBFHV21}. 

%Finding an allocation with maximum Nash social welfare for 2-value instances would provide a fair (EFX) and efficient (PO) allocation, Amanatidis et al.~\cite{AmanatidisBFHV21}; however, both Amanatidis et al.~\cite{AmanatidisBFHV21} and Garg et al.~\cite{GargM21}, left the problem of maximizing Nash social welfare for 2-value instances open. 

\subsection{Our Contribution}
In this paper, we characterize the complexity of maximizing NSW for 2-value instances. Surprisingly, the tractability of this problem changes according to the ratio between the two values $r$ and $s$. 
%More precisely, if we set $a=\sfrac{p}{q}$ and $b=1$\footnote{While considering maximum Nash welfare, the normalization of the agents valuations over goods does not affect the space of the optimal solutions.}, where $p,q \in \NN$ and $1\leq p < q$, then there exists a polynomial-time algorithm for maximizing Nash social welfare when $p=1,2$ and it is \classNP-hard to determine a maximum Nash welfare allocation if $p\geq 3$. Furthermore, we show that the problem is  \classAPX-hard for $p=4$ and $q=5$. Our contribution is summarized by the following theorems.
Since scaling an agent's valuation by a uniform factor for all goods does not affect the optimality properties of allocations, let us assume w.l.o.g.\ that $r=1$ and $s=\sfrac{p}{q}$, for some coprime numbers $p,q \in \NN$ such that $1\leq q < p$. 
%Our main contribution is summarized by the following theorems.

We first show two positive results. If $q$ is either $1$ or $2$, there is a polynomial-time algorithm that computes a maximum \NSW\ allocation.

\begin{restatable}{theorem}{IntegerResult}\label{IntegerResult}
There exists a polynomial-time algorithm that computes  a maximum \NSW\ allocation for integral instances, i.e., when $q = 1$ and $p$ is an integer greater than one. 
\end{restatable}

\begin{restatable}{theorem}{HalfIntegerResult}\label{HalfIntegerResult}
There exists a polynomial-time algorithm that computes  a maximum \NSW\ allocation for half-integral instances, i.e., when $q = 2$ and $p$ is an odd integer greater than two. 
\end{restatable}

The proof and the algorithm for the half-integer result are considerably more complex than the ones for the integer case. We complete the characterization in terms of computational complexity by showing \classNP-hardness in the remaining cases, i.e., when $q\geq 3$.

\begin{restatable}{theorem}{NPhard}\label{NPhard}
It is \classNP-hard to compute an allocation with optimal NSW for 2-value instances, for any constant coprime integers $p>q\geq 3$. 
\end{restatable}

\iffalse
\begin{theorem}
There exists a polynomial-time algorithm computing a maximum \NSW\ allocation for integral instances, i.e., when $q = 1$ and $p$ is an integer greater than one. 
\end{theorem}

\begin{theorem}
There exists a polynomial-time algorithm computing a maximum \NSW\ allocation for half-integral instances, i.e., when $q = 2$ and $p$ is an odd integer greater than two. 
\end{theorem}

We complete the characterization in terms of computational complexity by showing \classNP-hardness in the remaining cases, i.e., when $q\geq 3$.

\begin{theorem}
It is \classNP-hard to compute an allocation with optimal \NSW\ for 2-value instances, for any constant coprime integers $p>q\geq 3$.
\end{theorem}
\fi

%The analysis of our algorithms requires a number of novel technical contributions. In the next section, we highlight the main ideas and techniques.

We prove Theorem 1 in Section~\ref{sec:integer}, Theorem 2 in Section~\ref{sec:halfInteger}, and Theorem 3 in Section~\ref{sec:NPhard}. 
Throughout the paper, we write $s$ for $\sfrac{p}{q}$.

\subsection{Our Techniques}
\label{sec:techniques}

We interpret the problem of maximizing \NSW\ as a graph problem: We have an edge-weighted complete bipartite graph with the set of agents and the set of goods being the two sides. The weight of an edge represents the value of a good for an agent. It is either $1$ (in which case we call the edge a \emph{light edge}) or $s$ (called a \emph{heavy edge}). We say that a good is heavy if it has at least one incident heavy edge and is light otherwise. A (partial) allocation is a multi-matching in which every good has a degree of at most one. We call an allocation complete if all goods have degree one. This representation of allocations allows us to use the concept of alternating paths and alternating walks. An alternating walk is a sequence of alternating paths glued together at their endpoints. We next outline our algorithms, see Algorithm~\ref{The Algorithm}. The algorithm for the half-integral case has five steps. Only steps 1, 2, and 4 are needed for the integral case, and only step 1 is needed for the binary case, as the allocation of the goods of value 0 can be arbitrary. In steps 1 to 3, a heavy good can only be allocated to an agent that considers it heavy.

\begin{algorithm}[t]\smallskip
  
\begin{enumerate}
\item Determine the lexmin allocation of the heavy goods, i.e., push heavy goods towards smaller bundles as much as possible. Let $b_n \ge b_{n-1} \ge \ldots \ge b_1$ be the number of heavy goods allocated to the different agents sorted in decreasing order. Then, $b_n$ is minimal among all allocations of the heavy goods, and given that $b_n$ has its minimal value, $b_{n-1}$ is minimal, and so on. In other words, the string $b_nb_{n-1}\ldots b_1$ is lexicographically minimal.
\item Allocate the light goods greedily, i.e., allocate the light goods one by one, and always add the next good to a bundle of smallest value.
\item Let $x$ be the minimum value of any bundle in the resulting allocation. Call the bundles of value $x$, $x + \half$, and $x + 1$ the small bundles, and let $N_s$ be the owners of the small bundles. Optimize the allocation of the small bundles, i.e., allocate the goods contained in the small bundles to the agents in $N_s$ so that each bundle has a value in $\sset{x, x + \sfrac{1}{2}, x + 1}$ and the number of bundles of value $x + \half$ is maximum.
\item Take a heavy item from a bundle of maximum value and move it (as a light item) to a bundle of minimum value.
\item Reoptimize the allocation of the small bundles as in step 3. %If step 4 (in the integer case) or steps 4 and 5 (in the half-integral) case improved the allocation, repeat. If not reverse steps 4 or  4 and 5.
\end{enumerate}
\caption[t]{\label{The Algorithm}  Algorithm for the half-integral case. The output of the algorithm is the best allocation (= maximum NSW) obtained in steps 4 and 5. Only steps 1, 2, and 4 are needed for the integral case, and only step 1 is needed for the binary case. In steps 1 to 3, a heavy good can only be allocated to an agent that considers it heavy.   }
\end{algorithm}

\begin{example} \rm
\label{ex:int-halfInt}
Suppose there are two agents with identical valuations. We have five goods, of which two are heavy. It is easy to see that any optimal solution follows one of two patterns: (1) assign both heavy goods to one agent, all light goods to the other agent (NSW =  $6s$); (2) assign one heavy good to each agent, and, in addition, one agent gets two light goods, the other gets one light good (NSW = $(s + 1)(s + 2)$). Whether optimal allocations follow the first and/or the second pattern depends on the value of $s$. In particular, all optimal allocations follow the first pattern if and only if $s < 2$; all follow the second one if and only if $s > 2$; both patterns are optimal when $s = 2$. Hence, depending on the ratio $s$, the distribution of heavy and light goods in optimal allocations may change. In particular, the first pattern yields an unbalanced allocation of heavy goods -- one agent receives both heavy goods while the other none. In contrast, in the second pattern, the heavy goods are balanced and their allocation is lexmin. In the case $s = \sfrac{3}{2}$, step 3 is needed to optimize the allocation of the small bundles.  \mbox{}\hfill $\blacksquare$
\end{example}

%These observations motivate the main structure of our algorithms: depending on the ratio between heavy and light goods values, we allocate the heavy goods carefully and then allocate the light goods accordingly. 
It turns out that understanding how heavy goods are distributed in optimal allocations is the key challenge in computing a maximum \NSW\ allocation. We characterize the allocation of heavy goods and use these insights to design efficient algorithms.

\paragraph{Characterizing the allocation of heavy goods:}
%The main challenge in computing an allocation with maximum \NSW\ is to identify the distribution of heavy goods. 
First, we consider instances with $q=1$ (which we call \emph{integral} instances). We give a concise characterization of the distribution of heavy goods in a maximum \NSW\ allocation. We refer to the \emph{heavy-part} $A^H$ of an allocation $A$ as the set of all heavy edges in the allocation, and we call an allocation \emph{heavy-only} if the allocation contains only heavy edges, i.e., if $A^H = A$. One of our main structural results (shown in Lemma \ref{heavysim}) is that there exists a maximum \NSW\ allocation $\OPT$ such that the heavy-part of $\OPT$ is lexicographically minimum (\emph{lexmin}) among all heavy-only allocations of the same cardinality. Therefore, if we know the number of heavy-edges in $\OPT$, then the utility profile of the heavy-part of $\OPT$ is unique (as it is lexmin).

For instances with $q=2$ (called \emph{half-integral} instances), the lexmin property is not necessarily satisfied. The lexmin property may be interpreted as ``the distribution of heavy goods is balanced as much as possible''. Here, however, the heavy-part of the allocation might have to be unbalanced in order to maximize the \NSW\ (see Example \ref{ex:int-halfInt}). Let $x$ be the minimum value of any bundle after step 2. Then only bundles of value $x$, $x + \half$ and $x + 1$ can contain light items. Call these the small bundles. The alloation of the non-small bundles is optimal after step 2. The goal of step 3 is to reallocate the items in the small bundles so as to maximize the number of bundles of value $x + \half$. Note that the number of heavy items in bundles of value $x$ and $x + 1$ have the same parity, which is different from the parity of the number of heavy items in bundles of value $x + \half$. So, the number of heavy items in a bundle of value $x + d$, $d \in \sset{0,\half,1}$ lies in $S_d \assign \sset{\ell_d,\ell_d + 2,\ldots,r_d}$, where $r_d = \floor{(x + d)/s}$, $\ell_d \in \sset{0,1}$, $\ell_0 = \ell_1$, and $\ell_0 \not= \ell_{\half}$. The question of whether we can create additional bundles of value $x + \half$ then amounts to reallocating the heavy items 
so that more agents have a number of heavy items in $S_\half$, and the number of heavy items allocated to any agent does not exceed the bounds set by the $r_d$-values. We will show that this question can be answered by means of a polynomially solvable matching problem~\cite{Lovasz70,CornuejolsFactors}, see Section~\ref{sec:halfInteger} for details.

\paragraph{Allowing heavy goods to be allocated as light goods:} This step is conceptually quite simple, although the correctness proof is surprisingly involved. We take a heavy good from a most valuable bundle and move it to a least valuable bundle. In the half-integral case, we then reoptimize the allocation of the small bundles. We repeat and choose the best allocation obtained.

\paragraph{Hardness result:}
For instances with $q\geq 3$, it is \classNP-hard to compute a maximum \NSW\ allocation. Our proof is based on a reduction from Exact $q$-Dimensional Matching. Our proof formalizes the intuition that \emph{determining the structure of heavy goods in an optimal allocation} is the main challenge in maximizing the \NSW. In particular, we show that it is \classNP-hard to determine the distribution of heavy goods in an optimal allocation.

\subsection{History of the Paper and Further Related Work}

The present paper is based on three earlier papers~\cite{Akrami_Chaudhury_Hoefer_Mehlhorn_Schmalhofer_Shahkarami_Varricchio_Vermande_Wijland_2022, NSW-twovalues-halfinteger,NSW-twovalues-simplified}. The first paper presented our algorithm for integral instances and the \classNP-hardness results in the current paper as part of an extended abstract. We also discussed \classAPX-hardness for $q\ge 4$ and showed that our algorithm for integral instances has an approximation factor of at most $\frac{24}{29} \exp \left(\frac{110}{493}\right) < 1.0345$ for general 2-value instances. For the present paper, we decided to focus on the computational complexity of optimal solutions and omit the consideration of approximations. The second paper filled in the missing details and provided a solution for the half-integer case. The third paper considerably simplified the half-integer case. The second paper was accepted to MOR in November of '24, at around the time when we found the simplified proof. We submitted the simplified proof to MOR, and the editors suggested to merge the two papers. We followed the suggestion. 

Beyond additive valuations, the design of approximation algorithms for \emph{submodular valuations} received considerable attention, see Anari et al.~\cite{AnariMGV18}, Garg et al.~\cite{GargHM18}, Chaudhury et al.~\cite{ChaudhuryCGGHM18}), Garg et al.~\cite{GargHV21}, and Li and Vondrak~\cite{LiV21}. The currently best approximation ratio for submodular valuations is $4+\epsilon$ by Garg et al.~\cite{garg2022approximating}.

For \emph{binary} submodular valuations where the marginal value of every agent for every good is either $0$ or $1$, an allocation maximizing the \NSW\ can be computed in polynomial time, Babaioff et al.~\cite{BabaioffEF21}. 
In particular, in this case, one can in polynomial time find an allocation that is Lorenz dominating, simultaneously minimizes the lexicographic vector of valuations, and maximizes both utilitarian social welfare, i.e., the sum of the agents' utilities, and NSW. Moreover, this allocation is also strategyproof. 

More generally, there are approximation algorithms for maximizing \NSW\ with subadditive valuations, Barman et al.~\cite{BarmanBKS20}, Chaudhury et al.~\cite{ChaudhuryGM21}, Dobzinski et al.~\cite{Dobzinski}, and even asymmetric agents, Garg et al.~\cite{GargKK20} and Brown et al.~\cite{Brown-Laddha}.  

There is also literature on guaranteeing high \NSW\ with other fairness notions. For instance, relaxations of EFX can be guaranteed with high \NSW, Caragiannis et al.~\cite{CaragiannisGravin19} and Chaudhury et al.~\cite{ChaudhuryGM21}. Moreover, approximations of \emph{groupwise maximin share} (GMMS), Chaudhury et al.~\cite{CKMS20}, and \emph{maximin share} (MMS), Caragiannis et al.~\cite{CaragiannisKMPSW19} and Chaudhury et al.~\cite{CKMS20}, are achieved with high \NSW.

\subsection{Organization}
The rest of this paper is structured as follows. We start by providing definitions and notations in Section~\ref{sec:preliminaries}. In Section~\ref{sec:integer}, we discuss the polynomial-time algorithm for integral instances. In Section~\ref{sec:halfInteger}, we treat half-integral instances. Finally, in Section~\ref{sec:NPhard}, we show that for other classes of 2-value instances the problem of maximizing the \NSW\ is \classNP-hard.

\section{Preliminaries}\label{sec:preliminaries}

    A fair division instance $\instance$ is given by a triple $(\agents, \goods, (\uv_1, \dots, \uv_n))$, where $\agents$ is a set of $n \ge 1$ agents,  $\goods$ is a set of $m\geq n$ indivisible goods, and $\uv_i:2^\goods \rightarrow \RR_{\geq 0} $ is the \emph{valuation function} of agent $i$. Valuation functions are additive, i.e., $\uv_i(X) = \sum_{g \in X} \uv_{i}(\sset{g})$ for every $X \subseteq \goods$. For the sake of simplicity, we write $\uv_{i}(g)$ instead of $\uv_{i}(\sset{g})$. 
    
    In this paper, we study \emph{2-value} additive valuations, in which, for each $g\in\goods$, $\uv_i(g) \in \sset{1,s}$ for $s = \sfrac{p}{q}$ and fixed $p,q \in \NN$. To avoid trivialities, we assume $1 \le q< p$, and $p,q$ to be coprime numbers.
An \emph{allocation} $A=(A_1, \dots, A_n)$ is a partition of $\goods$ among the agents, where $A_i \cap A_j = \emptyset$, for each $i \neq j$, and $\bigcup_{i\in \agents} A_i = \goods$. We evaluate an allocation using the Nash social welfare 
    %$\NSW(A) = \left(\prod _{i\in N}\uv_i(A_i)\right)^{\sfrac{1}{n}}$. 
    \[
    \NSW(A) = \left(\ \prod _{i\in N}\uv_i(A_i)\ \right)^{\sfrac{1}{n}} \enspace .
    \]
    When comparing the $\NSW$ of two allocations, we frequently drop the power of $1/n$ for simplicity. 

    \subsection{Utility Graphs and Utility Profiles}
    In our paper, we exploit the connection between the concepts of allocation in fair division and of one-to-many matchings in bipartite graphs. For this reason, we find it more convenient to define an allocation in the context of a bipartite graph. Consider the complete bipartite graph $G = (N \cup M, E)$, where we have agents on one side and goods on the other side. We call the edge between agent $i$ and good $g$ \emph{heavy} if $\uv_i(g) =  s$ and \emph{light} otherwise. We use $\EH$ and $\EL$ to denote the set of heavy and light edges, respectively.  Moreover, good $g$ is \emph{heavy} for agent $i$ if $\uv_i(g) = s$ and \emph{light} otherwise. %Figure~\ref{fig1}a shows an instance with two agents and three goods.
	
	An \emph{allocation} is a subset $A\subseteq E$ such that for each $g \in M$ there is at most one edge in $A$ incident to $g$. Note that according to this definition allocations may be partial.
%	Nonetheless, in this work we aim to compute complete allocations, that is, for each $g\in\goods$ there exists an edge $e\in A$ that is incident to $g$.
	If there is an edge $(i,g) \in A$, we say that $g$ is \emph{assigned} to $i$ in $A$, or $i$ \emph{owns} $g$ in $A$, or $A$ \emph{assigns} $g$ to $i$. Otherwise, $g$ is \emph{unassigned}. An allocation is \emph{complete} if all goods are assigned. For an agent $i$, we use $A_i$ to denote the set of goods assigned to $i$ in $A$. We call $A_i$ the \emph{bundle} of $i$ in $A$. Then, $\uv_i(A_i)$ is the utility of $i$'s bundle for $i$. %Figure~\ref{fig2}b shows an allocation for the instance depicted in Figure~\ref{fig1}a.
	
        % 	\begin{figure}[tb!]
    %     \centering
    %     \begin{subfigure}{7cm}
    %         \centering
    %         \begin{tikzpicture}[every node/.style={circle,draw,font=\sffamily\small\bfseries, inner sep = 3pt}]
            
    %         \Vertices
            
    %         % the edges
    %         \foreach \i in {1,2}
    %             \foreach \j in {1,2,3}
    %                 \draw[-, gray] (a\i) -- (g\j);
                    
    %         \draw[-, very thick] (a1) -- (g1);
    %         \draw[-, very thick] (a1) -- (g2);
    %         \draw[-, very thick] (a2) -- (g2);
    %         \draw[-, very thick] (a2) -- (g3);
           
    %     \end{tikzpicture}
        
    %     (a) Graph representation.
    %     \end{subfigure}
    %     \begin{subfigure}{7cm}
    %         \centering
    %         \begin{tikzpicture}[every node/.style={circle,draw,font=\sffamily\small\bfseries, inner sep = 3pt}]
            
    %         \Vertices
            
    %         % the edges
    %         \draw[-, very thick] (a1) -- (g1);
    %         \draw[-, very thick] (a2) -- (g2);
    %         \draw[-, gray] (a1) -- (g3);
           
    %     \end{tikzpicture}
        
    %         (b) An allocation.
    %        \end{subfigure}
    %        \caption{The depicted graph corresponds to an instance with two agents and three goods. Thick black edges and thin gray edges correspond to heavy and light edges, respectively.\label{fig1}\label{fig2}
    %      %  The graph $G$ in (b) corresponds to a complete allocation $A$ in which $g_1$ and $g_3$ are allocated to $a_1$ and $g_2$ is allocated to $a_2$. Note that this allocation does not maximize $\NSW$. $G$ restricted to black edges is $\AH$. We have $deg_H(a_1)=deg_H(a_2)=1$.
    %      }
    % \end{figure}

	The \emph{utility vector} of an allocation $A$ is given by $(\uv_1(A_1 ),$ $\ldots, \uv_n(A_n))$, and its \emph{utility profile} is obtained by rearranging its components in non-decreasing order. A utility profile $(a_1,\ldots,a_n)$ is \emph{lexicographically smaller}\footnote{Note that $a_1 \le a_2 \le \ldots \le a_n$ in a utility profile. Then $(a_1,\ldots,a_n) <_{\mathit{lex}} (b_1,\ldots b_n)$ if the string $a_n\ldots a_1$ is lexicographically smaller than the string $b_n\ldots b_1$.} than a utility profile $(b_1,\ldots,b_n)$ (denoted by $(a_1,\ldots,a_n)<_{\mathit{lex}}(b_1,\ldots,b_n)$) if the profiles are different and $a_i < b_i$ for the largest $i$ with $a_i \not= b_i$. An allocation $A$ with utility profile $(a_1,\ldots,a_n)$ is \emph{lexmin} in a family $\cal A$ of allocations if there is no allocation $B \in \cal A$ with utility profile $(b_1,\ldots,b_n)$ such that $(b_1,\ldots,b_n)<_{\mathit{lex}} (a_1,\ldots,a_n)$.
	
    For an allocation $A$, its \emph{heavy part} $\AH$ is the restriction of $A$ to the heavy edges, i.e., $\AH = A \cap \EH$. An allocation $A$ is \emph{heavy-only} if $A = \AH$.
	For an agent $i$, $\AH_i$ is the set of heavy edges incident to agent $i$ under allocation $A$.
	We refer to $\modulus{\AH_i}$ as the \emph{heavy degree} of $i$ in $A$ and denote it by $\deg_H(i,A)$ or $\deg_H(i)$.

	\subsection{Alternating Paths}\label{altpath}
	We reformulated the fair division setting so that allocations correspond to multi-matchings. This is motivated by the fact that, in our algorithms, we improve the \NSW\ of an allocation using the notion of {\em alternating paths}. %We also use other techniques. 
	
	An \emph{alternating path} with respect to an allocation $A$ is any path whose edges alternate between $A$ and $E \setminus A$.
    Alternating paths with only heavy edges will be of particular interest. A \emph{heavy alternating path} is an alternating path  whose edges belong to $\EH$. 
	%In Figure~\ref{fig:heavyAlternating}a, we give an example of a heavy alternating path for the instance depicted in Figure~\ref{fig1}a.
	
   %  \begin{figure}[tb!]
   %      \centering
   %      \begin{subfigure}{9cm}
   %          \centering
   %          \begin{tikzpicture}[every node/.style={circle,draw,font=\sffamily\small\bfseries, inner sep = 3pt}]
            
   %          \Vertices
            
   %          % the edges
   %          \draw[-, very thick] (a1) -- (g1);
   %          \draw[-, very thick] (a2) -- (g2);
   %          \draw[-, dashed, very thick] (a1) -- (g2);
   %          \draw[-, dashed, very thick] (a2) -- (g3);
            
   %          \end{tikzpicture}
            
   %      (a) Allocation $A$ and the heavy alternating path $P$.
           
   %      \end{subfigure}
   %      \begin{subfigure}{7cm}
   %          \centering
   %          \begin{tikzpicture}[every node/.style={circle,draw,font=\sffamily\small\bfseries, inner sep = 3pt}]
            
   %          \Vertices
            
   %          % the edges
   %          \draw[-, dashed, very thick] (a1) -- (g1);
   %          \draw[-, dashed, very thick] (a2) -- (g2);
   %          \draw[-, very thick] (a1) -- (g2);
   %          \draw[-, very thick] (a2) -- (g3);
            
   %      \end{tikzpicture}

   %          (b) $A \oplus P$.
   %      \end{subfigure}
   %      \caption{
   % %     The path $P=(g_1, a_1, g_2, a_2, g_3)$ in (a) is a heavy alternating path where all the edges are heavy and the thick black edges shows allocation $A$. (b) shows $A \oplus P$.
   % In both figures the path $P=(g_1, 1, g_2, 2, g_3)$ is depicted. Solid edges represent the allocation.}
   %      \label{fig3} \label{fig:heavyAlternating} \label{fig:A+P}
   %  \end{figure}

	    An \emph{alternating path} with respect to two allocations $A$ and $B$ is any path whose edges alternate between $A \setminus B$ and $B \setminus A$, i.e., between edges only in $A$ and edges only in $B$. %but not in both $A$ and $B$. 
	
	An \emph{alternating path decomposition} is defined with respect to two heavy-only allocations $A$ and $B$. The graph $A \oplus B$ is defined on the same set of vertices as in $A$ and $B$. Moreover, an edge $e$ appears in $A \oplus B$ if and only if $e$ is in exactly one of $A$ or $B$.
	We decompose $A \oplus B$ into edge-disjoint paths; this decomposition is not unique. 
	Note that in $A \oplus B$, goods have degree zero, one, or two. 
	For a good of degree two, the two incident edges belong to the same path. For an agent $i$, let $a_i$, respectively $b_i$, be the number of $A$-edges, respectively $B$-edges, incident to $i$ in $A \oplus B$. Then we have $\min(a_i,b_i)$ alternating paths passing through $i$,  $\max(0, a_i - b_i)$ alternating paths starting in $i$ with an edge in $A$, and $\max(0, b_i - a_i)$ alternating paths starting in $i$ with an edge in $B$. The paths in the decomposition are maximal in the sense that no path can be extended without breaking another one. 

    Let $P$ be a heavy alternating path with respect to $A$ connecting two agents $i$ and $j$ with the edge of $P$ incident to $i$ in $\EH \setminus A$ and the edge incident to $j$ in $\AH$. Then 
	$A \oplus P$ contains the same number of heavy edges as $A$, i.e., $\modulus{\AH} = \modulus{(A \oplus P)^H} = \modulus{\AH \oplus P}$.
    Moreover, the heavy degree of $i$ increases by one, the heavy degree of $j$ decreases by one, and all other heavy degrees are unchanged.  
	
	% \begin{example} \rm
	%     Consider the example shown in Figure \ref{fig1}a and allocation $A$ with 
	%     $$A^H = \{(1,g_1), (2, g_2)\}$$
	%     shown in Figure \ref{fig2}b. Let $B$ be another allocation for which 
	%     $$B^H = \{(1,g_1),(1,g_2)\}.$$
	%     Figure \ref{figPathDecompose} shows $A^H \oplus B^H$. Black edges are only in $A^H$ and dashed edges are only in $B^H$. The path decomposition of $A \oplus B$ consists of the unique path $P=(1,g_2,2)$ in $A \oplus B$.
        % \hfill $\blacksquare$
	% \end{example}
	
	% \begin{figure}
	% \centering
	%     \begin{tikzpicture}[every node/.style={circle,draw,font=\sffamily\small\bfseries, inner sep = 3pt}]
            
        %     \Vertices
            
        %     % the edges
        %     (a1) edge [bend right] node (g1)
        %     \draw[-, very thick, dashed] (a1) -- (g2);
        %     \draw[-, very thick] (a2) -- (g2);
            
        % \end{tikzpicture}

	%     \caption{$A^H \oplus B^H$}
	%     \label{figPathDecompose}
	% \end{figure}	

We will often compare distinct allocations and use a notion of distance between them.
	    The {\em distance} between two allocations is the number of edges that only exist in one of the allocations;
	    formally, the distance between two allocations $A$ and $B$ is $|A \oplus B|$.

It will turn out that in half-integral instances, dealing with alternating paths is not sufficient to compute optimal allocations. For this reason, in Section~\ref{sec:halfInteger} below, we will also rely on a more involved path structure to deal with this problem, namely, \emph{improving walks}.

In the rest of the paper, we denote by $\OPT$ an allocation that maximizes the \NSW\ and by $\OPT^H$ its heavy part. Furthermore, we specify the bundle of the agent $i$ in $\OPT$ by $\OPT_i$ and denote its heavy part by $\OPT^H_i$.

% \subsection{Math Preliminaries}\label{app: basic}

% The following Lemma is useful for showing that certain re-allocations  increase the \NSW. 

% \begin{lemma}\label{basic} \renewcommand{\theenumi}{\alph{enumi})}
% Let $a$, $b$, $c$, $d$, $d_1$, and $d_2$ be non-negative reals. 
% 	\begin{enumerate}
% 		\item If $a \ge b$ and $d \in [0,a-b]$ then 
% 		$ab \le (a-d)(b + d)$ with equality if and only if $d = 0$ or $d = a - b$.
% 		\item If $a \ge b \ge c$, $b \ge c + d_2$, and $a \ge c + d_1 + d_2$ then $abc \le (a - d_1)(b - d_2)(c + d_1 + d_2)$ with equality if and only if $c = 0$ and $d_2 = b$ or $d_2 \in \{0,b - c\}$ and $d_1 \in \{0, a - c - d_2\}$. 
% 	\end{enumerate}
% \end{lemma}
% \begin{proof} For a) we have $a \ge b + d \ge b$ and $d \ge 0$ and hence 
%   \[ (a -d)(b + d) - ab = (a - b - d) d \ge 0\]
%   with equality if and only if $d = 0$ or $d = a- b$.

% Part b) is obvious if $a = 0$. Note that $a = 0$ implies $b = c = d_1 = d_2 = 0$. It is also obvious, if $c = 0$ and $d_2 = b$. Then LHS and RHS are zero. So assume $a > 0$ and either $c > 0$ or $d_2 < b$. In either case $b - d_2 > 0$. We apply part a) twice and obtain
%   \begin{align*}
%     abc &\le a (b - d_2) (c + d_2)    &&\text{equal iff $d_2 = 0$ or $d_2 = b - c $ since $a > 0$.}\\
%         &\le (a - d_1)(b - d_2)(c + d_1 + d_2)  &&\text{equal iff $d_1 = 0$ or $d_1 = a - c - d_2$ since $b - d_2 > 0$.}
% \end{align*}\vspace{-1.5cm}\par
% \end{proof}

\section{Integral Instances}\label{sec:integer}

In this section, we consider \emph{integral} instances, i.e., $q$ is $1$ and $p$ is an integer greater than one. We also term the valuation functions integral. Our main result is a polynomial-time algorithm to find a maximum \NSW\ allocation.

\subsection{Properties of an Optimal Allocation}	
    
We first study the properties of optimal allocations. The main insight is stated in Lemma \ref{heavysim}. Roughly speaking, it states that there exists an optimal allocation $\OPT$, in which heavy goods\footnote{Recall that a good is heavy if it is heavy for some agent.} are assigned as evenly as possible. More formally, the utility profile of $\OPTH$ is lexmin among all heavy-only allocations with the same cardinality. Later, we use this property to prove that the utility profile of $\AH$ at the end of step 1) of Algorithm~\ref{The Algorithm} is equal to the utility profile of $\OPT^H$, if $\OPT$ is chosen cleverly among the set of all optimal allocations. After this, it will not be difficult to prove that the utility profiles of $A$ and $\OPT$ match. 

For an allocation $A$, $\min(A) = \min_i \uv_i(A_i)$ denotes the minimum utility of any of its bundles.

\begin{claim}\label{no-path}\label{prop3}Let $\OPT$ be an optimal allocation.
  \begin{enumerate}
    \item Let $j$ be an agent.
      If $\uv_j(\OPT_j) \ge \min(\OPT) + 2$, then all goods in $\OPT_j$ are heavy for $j$.  As a consequence, only bundles of utility $\min(\OPT)$ and $\min(\OPT) + 1$ can contain light goods. Bundles with higher values only contain goods that are heavy for their owner.
    \item If there is a heavy alternating path with respect to $\OPT$ that starts with an $\OPT$-edge from an agent $i$ to an agent $j$, then $v_i(OPT_i) \le v_j(OPT_j) + p$.
     \item  If a good $g$ is allocated as a light good to an agent $i$ but could be allocated as a heavy good to an agent $j$ who is allocated good $g'$ which is light for $j$, then the allocation is not optimal. 

      \end{enumerate}
	\end{claim} 
	\begin{proof}
          \begin{enumerate}
          \item Assume otherwise. Take a good that is light for $j$ and reallocate it to an agent $i$ for which $\uv_i(\OPT_i) = \min(\OPT)$. This improves the $\NSW$, a contradiction.
           \item Assume such an alternating path exists and call it $P$.
          In $\OPT \oplus P$, $i$ is incident to one fewer heavy edge, and $j$ is incident to one more heavy edge, and, hence, the $\NSW$ changes by the factor
          \[(v_i(\OPT_i) - p)(v_j(\OPT_j) + p)/ v_i(OPT_i) v_j(OPT_j).\]
          This factor must be no larger than one. Thus, $v_i(OPT_i) \le v_j(OPT_j) + p$.
        \item   Swapping the goods $g$ and $g'$ among agent $i$ and agent $j$ increases the value of agent $j$ by $p -1$ and does not decrease the value of agent $i$.
          \end{enumerate}
	 \end{proof}
		
	% \begin{corollary}\label{light-bundles}
	% 	Let $\OPT$ be any optimal allocation. Only bundles of utility $\min(\OPT)$ and $\min(\OPT) + 1$ can contain light goods. Bundles with higher values only contain goods that are heavy for their owner. 
	% \end{corollary} 
	
	% \begin{claim}\label{no-path}
	%     If there is a heavy alternating path with respect to $\OPT$ starting with an $\OPT$-edge from an agent $i$ to an agent $j$ then $v_i(OPT_i) \le v_j(OPT_j) + p$.
	% \end{claim}
	% \begin{proof} Assume such an alternating path exists and call it $P$.
        %   In $\OPT \oplus P$, $i$ is incident to one fewer heavy edge, and $j$ is incident to one more heavy edge, and, hence, the $\NSW$ changes by the factor
        %   \[(v_i(\OPT_i) - p)(v_j(\OPT_j) + p)/ v_i(OPT_i) v_j(OPT_j).\]
        %   This factor must be no larger than one. Thus $v_i(OPT_i) \le v_j(OPT_j) + p$.
	%   \end{proof}
	
	% \begin{claim} \label{prop3}
	%     If a good $g$ is allocated as a light good to an agent $i$ but could be allocated as a heavy good to an agent $j$ who is allocated good $g'$ which is light for $j$, then the allocation is not optimal. 
	% \end{claim}
	% \begin{proof}
	%     Swapping the goods $g$ and $g'$ among agent $i$ and agent $j$ increases the value of agent $j$ by $p -1$ and does not decrease the value of agent $i$.
	%  \end{proof}
	
%The rest of this section is dedicated to proving the following lemma. 
	
	\begin{lemma} \label{heavysim} Let $K \in \NN$ and assume that there is an allocation that maximizes $\NSW$ and has exactly $K$ goods allocated as heavy. Then, there is an optimal allocation $A$ such that the utility profile of $\AH$ is lexmin among all heavy-only allocations with exactly $K$ goods allocated as heavy.
	\end{lemma}
	\begin{proof}
We choose $A$ and heavy-only $\CH$ as follows: (1) $A$ is an optimal allocation with exactly $K$ goods allocated as heavy, (2) $\CH$ is lexmin among all allocations of $K$ heavy goods, and (3) the distance between $\AH$ and $\CH$ is minimum among all allocations satisfying (1) and (2). 
	
    We will show that $A^H$ and $C^H$ agree. Assume otherwise, and let us consider $\AH \oplus \CH$. We label an edge with either $A$ or $C$, indicating whether it belongs to $\AH$ or $\CH$. Note that in this graph, goods have degree zero, one, or two. $\AH \oplus \CH$ decomposes into edge-disjoint maximal alternating paths and cycles. We first show that there are no heavy alternating cycles.
    
    \begin{observation}\label{noCycle}
    There are no cycles in $\AH \oplus \CH$.
    \end{observation}
    \begin{proof}
    Assume first that there is an alternating cycle, say $D$. Then $\CH \oplus D$ has the same utility profile as $\CH$ and is closer to $\AH$, a contradiction. 
     \end{proof}
    
    Hence, we have only alternating paths. We next make more subtle observations about the edge-disjoint alternating paths in $\AH \oplus \CH$. First, we show that we cannot have an alternating path with both endpoints as goods. 
    
    \begin{observation}\label{noEvenLGoods}
    There are no maximal paths in $\AH \oplus \CH$ with both endpoints as goods.
    \end{observation}
    \begin{proof}
    Assume otherwise, and let $P$ be such a path. Then, $\CH \oplus P$ has the same utility profile as $\CH$ and is closer to $\AH$, a contradiction. 
     \end{proof}

    So at least one endpoint of each maximal alternating path is an agent. If there is an even length maximal alternating path, both endpoints are agents. Let $P$ be such a path, let $i$ and $j$ be its endpoints, and assume w.l.o.g.~that $P$ starts in $i$ with an edge in $\AH$ and ends in $j$ with an edge in $\CH$. Then, $\abs{\AH_i} > \abs{\CH_i}$ and $\abs{\CH_j} > \abs{\AH_j}$. 
    Set $Q$ to the empty path. 
    
    If all maximal alternating paths have odd length, exactly one endpoint of each path is an agent. Let $i$ and $j$ be agents with $\abs{\AH_i} > \abs{\CH_i}$ and $\abs{\CH_j} > \abs{\AH_j}$, respectively, and let $P$ and $Q$ be maximal alternating path starting in $i$ and $j$, respectively. The other endpoints of $P$ and $Q$ are goods. 
    
    We next show $\abs{\AH_j} \leq \abs{\AH_i} -2$.

	\begin{observation}\label{jHasLessHeavyGoodThani}
	$\abs{\AH_j} < \abs{\AH_i}$.
	\end{observation}	
	\begin{proof}
		Assume otherwise, i.e.,  
		$\abs{\AH_j} \ge \abs{\AH_i}$.
		%$u'_j(A_j) \ge \uv_i'(A_i)$. 
		Then,
		$\abs{\CH_j} > \abs{\CH_i}+1$,
		%$u'_j(C_j) \ge u'_i(C_i) - 2$ 
		and, hence, $\CH \oplus P \oplus Q$ is lexicographically smaller than $\CH$, a contradiction.
	 \end{proof}
		
	\begin{observation}\label{jHasLessHeavyGoodThani-1}
	$\abs{\AH_j} < \abs{\AH_i} -1$.
	\end{observation}
	\begin{proof}
	If 
		$\abs{\AH_j} = \abs{\AH_i} -1$,
		%$u'_j(A_j) = \uv_i'(A_i) - 1$, 
		then $\AH\oplus P \oplus Q$ and $\AH$ have the same utility profile with respect to heavy goods. Also, $\AH \oplus P \oplus Q$ is closer to $\CH$ than $\AH$. Finally, we swap the goods that are light for $i$ in $A_i$ with the goods that are light for $j$ in $A_j$. The value of the resulting bundles for $i$ and $j$ are at least $\uv_j(A_j)$ and $\uv_i(A_i)$, respectively. If either inequality is strict, $A$ was not optimal, a contradiction. So, both inequalities are equalities, and, thus, the resulting allocation is again optimal, and with respect to heavy edges, it has the same utility profile as before and is closer to $\CH$, a contradiction.
	 \end{proof}
		
	\begin{observation}\label{noEvenLAgents}
	The paths $P$ and $Q$ do not exist. 
	\end{observation}	
	\begin{proof}
		By Observation \ref{jHasLessHeavyGoodThani-1}, we have $\abs{\AH_j} \le \abs{\AH_i} -2$. By Claim~\ref{no-path}.2, $v_i(A_i) \le v_j(A_j) + p$, and, hence, $A_j$ contains $p$ goods light for $j$. Augmenting $P$ and $Q$ to $A$ and moving these light goods to $A_i$ yields an allocation that has the same NSW as $A$ and is closer to $C^H$, contradicting the choices of $A$ and $C^H$.
 \end{proof}

We can now complete the proof of Lemma~\ref{heavysim}. Since $P$ and $Q$ do not exist, $A^H = C^H$.  \end{proof}

	\newcommand{\LG}{\mathit{LG}}

    Example~\ref{ex:int-halfInt} shows that Lemma \ref{heavysim} does not hold when $s$ is half-integral.

\subsection{Algorithm}

    Algorithm ~\ref{The Algorithm} operates in three phases. 
    
    The first phase (step 1) finds a heavy-only allocation that maximizes the $\NSW$. This phase is equivalent to maximizing the $\NSW$ in a binary instance. Barman et al. \cite{BarmanKV18AAMAS} proved that this is possible in polynomial time. Notice that, after this phase, light goods remain unallocated.
    
    In the second phase (step 2), we greedily allocate the light goods (one by one) to an agent with minimum utility. As such, we term this phase ``allocating light goods''.
    
    In the third phase (step 4), we try to improve the NSW by re-allocating heavy goods to agents considering them light. More precisely, we take a heavy good from the bundle of an agent with maximum utility and allocate it to an agent with minimum utility as long as the $\NSW$ increases. In Lemma \ref{importantLemma} below we show that the reallocated goods are light for their new owners. This means that as long as there is progress, we turn some heavy good into a light good.

	% \begin{algorithm}[t]
        % \DontPrintSemicolon
	% 	\caption{TwoValueMaxNSW} \label{alg2}
	% 	Input : $N, M, \uv = (\uv_1, ... , \uv_n)$ \;
	% 	Output: allocation $A$\;
  
	% 	%\begin{algorithmic}[]
        % \medskip
	% 	  \tcc{Phase 1: Heavy-Only Allocation}
        % let $\uv'_i: 2^M \rightarrow \mathbb{N}$ be an additive function with $\uv'_i(g) = 1$ if $\uv_i(g) = p$ and $\uv'_i(g) = 0$ otherwise for all $i \in N$, $g \in M$\;
        % $\uv' \leftarrow (\uv'_1, \dots , \uv'_n)$\;
        % $A =$ BinaryMaxNSW$(N, M, \uv')$\;

        % \medskip
        % \tcc{Phase 2: Allocating Light Goods}
        % number the agents so that $\uv_1(A_1) \le \uv_2(A_2) \le ... \le \uv_n(A_n)$\;
        % \While {there is an unallocated good $g$}{
        %     \tcc{$\uv_1(A_1) \le \uv_2(A_2) \le ... \le \uv_n(A_n)$}
        %     let $k$ be the maximum index s.t.~$\uv_k(A_k) = \uv_1(A_1)$\;
        %     $A \leftarrow A \cup \{(k,g)\}$
        % }

        % \medskip
        % \tcc{Phase 3: Increasing $\NSW$} 
	% 	\While{$ \uv_n(A_n) >  p \cdot \uv_1(A_1) + p $ }{
        %     let $k$ be the maximum index s.t.~$\uv_k(A_k) = \uv_1(A_1)$\;
        %     let $t$ be the minimum index s.t.~$\uv_t(A_t) = \uv_n(A_n)$\;
        %     let $g$ be a good such that $(t,g) \in A$\;
        %     $A \leftarrow A \setminus \{(t,g)\}$\;
        %     $A \leftarrow A \cup \{(k,g)\}$
        % }
		
	% 	return $A$
	% \end{algorithm}

    \subsubsection{Phase 1: Heavy-Only Allocations}
    % \paragraph{Phase 1:} Heavy-Only Allocations
    In the first step, we compute a heavy-only allocation that maximizes the $\NSW$. For completeness, let us recapitulate the algorithm from Barman et al.~\cite{BarmanKV18AAMAS}. 
	
	In order to compute a heavy-only allocation that maximizes $\NSW$, we start with a heavy-only allocation $A$ of maximum cardinality, i.e., in $A$ every heavy good is assigned to an agent for which it is heavy. We then improve the $\NSW$ of $A$ by augmentation of some heavy alternating paths. We search for a heavy even-length alternating path $P$ that connects an agent $i$ to an agent $j$, starting with an edge outside $A$ and ending with an edge in $A$, and with the heavy degree of $j$ at least two larger than the heavy degree of $i$ in $A$. Given any such path $P$, we augment $P$ to $A$, i.e., we update $A$ to $A \oplus P$. When no such path can be found, the algorithm stops and returns allocation $A$. 

	% \begin{algorithm} 
	% 	\caption{BinaryMaxNSW} \label{alg1}
	% 	Input : $N, M, v = (\uv_1, ... , \uv_n)$ 
		
	% 	Output: allocation $A$
		
	% 	%\begin{algorithmic}
	% 		 let $G$ be the corresponding graph, i.e, agent $i$ is connected to good $g$ iff $g$ is heavy for $i$, and let $A$ an arbitrary assignment of the items to the agents
			
        %         \While{there is an alternating path $a_0, g_1, ... , g_k, a_k$ such that ~$|A_0| \le |A_k| - 2$}{
                
	% 		        \For {$\ell \leftarrow k$ to $1$}{
           
	% 		         $A \leftarrow A \backslash (a_{\ell}, g_{\ell})$\\
	% 		         $A \leftarrow A \cup (a_{\ell-1}, g_{\ell})$
        %             }
	% 		%\EndFor		
	% 		%\EndWhile
        %         }
                
	% 	  return $A$
	% 	%\end{algorithmic}
		
	% \end{algorithm}
	
	Barman et al.\ \cite{BarmanKV18AAMAS} proved that this algorithm returns an allocation with maximum $\NSW$ for binary instances. Furthermore, Halpern et al.\ \cite{HalpernPPS20} proved that in binary instances the set of lexmin allocations\footnote{We defined the lexmin allocation as the allocation where the maximum is smallest, subject to this the second largest is smallest, and so on. Alternatively, one could ask for the allocation where the minimum is largest, subject to this the second smallest is largest, and so on. In general, these definitions are not equivalent. Imagine, we are to choose between the tuples $(3,4,8)$ and $(2,6,7)$. Both have sum 15 and the two definitions select different tuples. However, in our context the two definitions are equivalent. Assume otherwise, say the first definition yields the tuple $(a_1,\ldots,a_n)$ and the second definition yields $(b_1,\ldots,b_n)$. Then in the bipartite graph corresponding to the two allocations, there is an alternating path that allows to move the allocations towards each other, demonstrating that neither is optimal.} is identical to the set of allocations with maximum $\NSW$.
	
	\begin{lemma}[\cite{BarmanKV18AAMAS,HalpernPPS20}]\label{binaryNSW}
	The first step of Algorithm~\ref{The Algorithm} computes an allocation $A^H$ with maximum $\NSW$. Furthermore, the optimal allocations of the heavy items are exactly the lexmin allocations.
      \end{lemma}

         \subsubsection{Phase 2: Adding Light Goods}\label{phase2}
         Phase 2 is trivial. We allocate the light goods greedily, i.e., we allocate the light goods one by one and always to an agent of minimum utility. The following Lemma is more general than needed for the correctness of Phase 2. It states that for a fixed allocation of the heavy goods, adding the light goods greedily maximizes NSW. We will need the increased generality for the analysis of Phase 3.

         \newcommand{\xmin}{x_{\min}}

         	\begin{lemma} \label{greedy} 
 Let $\OPT$ be any optimal allocation, and let $L$ be the set of goods that are allocated as light goods in $\OPT$. Then, the following allocation is also optimal. Start with $\OPTH$ and then allocate the goods in $L$ greedily, i.e., allocate the goods one by one and, for each good $g \in L$, choose an arbitrary agent $i$ that currently owns a bundle of minimum value, and assign $g$ to them.
        \end{lemma}
	\begin{proof} Let $\xmin$ be the minimum value of any bundle in $\OPT$. Then all light goods are in bundles of value $\xmin$ and $\xmin + 1$. Let $I$ be the indices of the bundles in $\OPT$ with heavy weight at most $\xmin$, and let $k$ be the number of bundles in $I$ with weight $\xmin + 1$. Then $\abs{L} + \sum_{i \in I} v_i(\OPT_i^H) = \abs{I} \xmin + k$. 

          Assume first that the goods in $L$ are light for all agents. Greedy will first fill up all bundles in $I$ to value $\xmin$. This requires $\abs{I}\xmin - \sum_{i \in I} v_i(\OPT_i^H) = \abs{L} - k$ items. Then it will increase the value of $k$ bundles of value $\xmin$ to $\xmin + 1$. Thus Greedy will produce $k$ bundles of value $\xmin + 1$ and $\abs{I} - k$ bundles of value $\xmin$. Therefore, 
Greedy will generate an allocation with the same $\NSW$ as $\OPT.$

If the goods in $L$ are not light for all agents, assume for a moment that they are light for all agents, and proceed as above. At the end, reconvert the value of any good in $L$ to its true value for its owner. This cannot decrease NSW. If it increases NSW, $\OPT$ was not optimal.
\end{proof}

         \subsubsection{Phase 3: Conversion of Heavy Goods into Light Goods}\label{sec-phase3}

         In Phase 3, we convert some of the heavy goods to light goods, see Algorithm~\ref{phase3}. As long as $ \uv_n(A_n) >  p \cdot \uv_1(A_1) + p $, we take a good (guaranteed to be heavy) from one of the heaviest bundles (we take the heaviest bundle with smallest index) and move it to a bundle of minimum value (we take the lightest bundle with largest index). The moved good will be light for the receiving agent. The change increases the NSW. 

         \begin{algorithm}[t]
               %\tcc{Phase 3: Increasing $\NSW$} 
		\While{$ \uv_n(A_n) >  p \cdot \uv_1(A_1) + p $ }{
            let $k$ be the maximum index s.t.~$\uv_k(A_k) = \uv_1(A_1)$\;
            let $t$ be the minimum index s.t.~$\uv_t(A_t) = \uv_n(A_n)$\;
            let $g$ be a good in $A_t$. Remove it from $A_t$ and add it to $A_k$.}
              \caption{\label{phase3} The algorithm for Phase 3. The bundles are ordered initially such that $v_1(A_1) \le v_2(A_2) \le \ldots \le v_n(A_n)$. The choices of $k$ and $t$ guarantee that this property is maintained. }
              \end{algorithm}

              \subsubsection{Correctness}
Phase 1 already gives us an optimal allocation $X$ of the heavy goods, which is a lexmin allocation of the heavy goods. In Phase 2, we allocate the light goods to the least valuable bundles and obtain an allocation that is optimal amongst all allocations in which all heavy goods are allocated as heavy goods. In Phase 3, we take heavy goods from heaviest bundles and move them to lightest bundles as long as such a move improves NSW. Let $A$ be the allocation computed by Algorithm \ref{phase3}. First, we prove that there is an allocation $\OPT$ with maximum $\NSW$ such that the utility profile of $\OPT^H$ and $\AH$ are the same. Then we prove that in allocation $\OPT$, the remaining goods are allocated the same way as in $A$. 
    We start by establishing some invariants of Algorithm \ref{phase3}.

 \begin{lemma}\label{importantLemma} Fix a numbering of the agents at the beginning of Phase 3 such that $\uv_1(A_1) \le \uv_2(A_2) \le \ldots \le \uv_{n-1}(A_{n-1}) \le \uv_n(A_n)$. During Phase 3, the following holds:
  \renewcommand{\theenumi}{\textrm{\alph{enumi}}}

  \begin{enumerate}
    \item \label{neatOrder} The ordering $\uv_1(A_1) \le \uv_2(A_2) \le \ldots \le \uv_{n-1}(A_{n-1}) \le \uv_n(A_n)$ is maintained.
    \item \label{oneStep} Let $i$ be any agent. If $A_i$ contains a good that is light for $i$, then $\uv_i(A_i) \le \uv_1(A_1) + 1$. 
    \item \label{lexmin} $\AH$ is lexmin among all heavy-only allocations of the same cardinality. 
    \item Whenever a good is moved in Phase 3, say from bundle $A_t$ to bundle $A_k$, all goods in $A_t$ are heavy for $t$ and light for $k$.
    \item \label{NSWincreases}  Each iteration of the while-loop increases the $\NSW$.
  \end{enumerate}
\end{lemma}
\begin{proof} We prove statements a) to d) by induction on the number of iterations in Phase 3. Before the first iteration, a) and d) trivially hold. Claim b) holds since in Phase 2 we allocate only goods that are light for every agent, and since the next good is always added to a lightest bundle. Claim c) holds by Lemma \ref{binaryNSW}.

  Assume now that a) to c) hold before the $i$-th iteration, and that we move a good $g$ from $A_t$ to $A_k$ in iteration $i$. We will show that d) holds for $A_t$ and $A_k$, and that a) to c) hold after iteration $i$.

  By the condition of the while-loop, we have $\uv_t(A_t) > p \cdot (\uv_k(A_k) + 1)$. Thus, $A_t$ contains only goods that are heavy for $t$ by part b) of the induction hypothesis. Let $g$ be any good in $A_t$. If we also had $\uv_k(g) = p$, then moving $g$ from $A_t$ to $A_k$ would result in an allocation of heavy goods that is lexicographically smaller, a contradiction to c). Thus, $\uv_k(g) = 1$.

  Let $c_t$ be the maximum number of heavy agents owned by any agent. Note that agent $t$ owns $c_t$ heavy goods. Let $S$ be the set of agents that either own $c_t$ items in $\AH$ or to which there is a heavy alternating path starting with an edge in $\AH$ from an agent owning $c_t$ heavy items. Let $s = \abs{S}$. Since $\AH$ is lexmin, all agents in $S$ own $c_t - 1$ or $c_t$ heavy items, i.e., $n - t + 1$ agents own $c_t$ and the remaining own $c_t - 1$. No agent outside $S$ considers a good owned by an agent in $S$ heavy, because such an agent would then also belong to $S$. Hence, all the heavy items assigned to the agents in $S$ must be assigned to them in any allocation. After the conversion of a heavy item to a light item, the number of heavy items that must be assigned to the agents in $S$ is reduced by at most one; the conversion of $g$ reduces the number by one. The lexmin-assignment of these items is to have one less bundle of size $c_t$ and one more of size $c_t - 1$. Thus the heavy allocation after the conversion is again lexmin.
		
	Since $k$ is the largest index such that $\uv_k(A_k) = \uv_1(A_1)$ before the $i$-th iteration, b) holds after the $i$-th iteration.

  It remains to show that part a) holds after the $i$-th iteration. The value of the $k$-th bundle increases by 1, and the value of the $t$-th bundle decreases by $p$. We need to show $\uv_t(A_t) -p \ge \uv_{t-1}(A_{t-1}) +\delta$, where $\delta =1 $ if $k = t-1$ and $\delta = 0$ otherwise. 
  
  \begin{itemize}
    \item If $k = t-1$, we have 
    $\uv_t(A_t) \ge p \cdot (\uv_{t-1}(A_{t-1} )+ 1) + 1$, and, hence, $\uv_t(A_t) - \uv_{t-1}(A_{t-1}) - p - 1 \ge (p-1) \cdot \uv_{t-1}(A_{t-1})  \ge 0$. 
    \item If $k < t-1$, by definition of $t$, $\uv_t(A_t) > \uv_{t-1}(A_{t-1})$. If all goods in $A_{t-1}$ are heavy for $t-1$, the difference in weight is at least $p$ and we are done. 
    If $A_{t-1}$ contains a good that is light for $t-1$, then $\uv_{t-1}(A_{t-1}) \le \uv_k(A_k) + 1$ by condition b), and, hence, $\uv_t(A_t) \ge p \cdot \uv_{t-1}(A_{t-1}) + 1$. This implies $\uv_t(A_t) \ge \uv_{t-1}(A_{t-1}) + p$ except if $\uv_{t-1}(A_{t-1}) = 0$. In the latter case, $k = t-1$, but we are in the case $k < t-1$.

  We also need to show that after moving the good, $\uv_k(A_k) \leq \uv_{k+1}(A_{k+1})$. By the choice of $k$, $\uv_k(A_k) \leq \uv_{k+1}(A_{k+1}) - 1$ holds before moving the good. After moving the good, by condition d), $\uv_k(A_k)$ increases by $1$, and, therefore, $\uv_k(A_k) \leq \uv_{k+1}(A_{k+1})$.
  \end{itemize}

  Let us finally show \ref{NSWincreases}). By d) we know that whenever a good is moved from bundle $A_t$ to bundle $A_k$, all goods in $A_t$ are heavy for $t$ and light for $k$.
  Therefore, moving a good from $A_n$ to $A_1$ increases the $\NSW$ if and only if $(\uv_n(A_n) - p)(\uv_1(A_1) + 1) > \uv_n(A_n)\uv_1(A_1)$, which holds true if and only if $ \uv_n(A_n) >  p \cdot \uv_1(A_1) + p $.
   \end{proof}

	\begin{lemma}\label{oneCase}
	Let $A$ be the output of Algorithm \ref{phase3}. Let $\OPT$ be an allocation that maximizes the $\NSW$ and, subject to that, maximizes $\abs{\OPTH}$. Then, $\abs{\OPTH} \ge \abs{\AH}$. 
	\end{lemma}

	\begin{proof} Among the allocations that maximize $\NSW$ and subject to that maximize $\abs{\OPTH}$, let $\OPT$ be the lexmin allocation. Assume $\abs{\OPTH} < \abs{\AH}$. By the choice of $\OPT$, $A$ cannot maximize the $\NSW$ (since $\abs{\OPTH}$ is maximum for any allocation that maximizes the $\NSW$). We first show that we may assume $\OPTH_i \subseteq \AH_i$ for all $i$. By Lemma~\ref{importantLemma}c), $\AH$ is lexmin among all heavy-only allocations of cardinality $\abs{\AH}$. We obtain a lexmin heavy-only allocation $\CH$ of cardinality $\abs{\OPTH}$ from $\AH$ by repeatedly removing a good from the lowest indexed bundle of maximum utility. Since $\OPTH$ is a lexmin heavy-only allocation of cardinality $\abs{\OPTH}$ (Lemma~\ref{heavysim}), the utility profiles of $\CH$ and $\OPTH$ agree, and, hence, there is a bijection $\pi$ of the set of agents such that $\abs{\CH_i} = \abs{\OPTH_{\pi(i)}}$. 
		Let $\ell_i$ be the number of goods in $\OPT_{\pi(i)}$ which are light for $\pi(i)$. Note that the number of goods that are not allocated under $C^H$ is equal to the total number of  goods that are light to their owner under $\OPT$, i.e, $\Sigma_{i \in [n]} \ell_i$. 
		Obtain an allocation $C$ from $\CH$ by giving $\ell_i$ not yet allocated goods to $C_i$. Then, $\uv_i(C_i) \ge \uv_{\pi(i)}(OPT_{\pi(i)})$ for all $i$. Thus, $C$ is optimal, and $\uv_i(C_i) = \uv_{\pi(i)}(OPT_{\pi(i)})$. Also, $C^H_i \subseteq \AH_i$ for all $i$. So, choosing $\OPT$ as $C$, we may assume 
        $\OPTH_i \subseteq \AH_i$ for all $i$. 
        
		%Since $C$ is optimal, \Cref{greedy} gives us an alternative way of obtaining an optimal allocation. Start from $\CH$ and then allocate the goods that are allocated as light goods (i.e., the goods that are light for their owner) in $\OPT$ in a greedy fashion. Let then $R$ be the set of agents from which we removed a good in moving from $\AH$ to $\CH$. Note that no good is added to the bundle of an agent in $R$ when adding goods greedily. Otherwise, we would have a contradiction to \Cref{prop3}.

  Let $t$ be such that $A^H_t \setminus \OPT_t^H \not= \emptyset$, and let $g \in A^H_t \setminus \OPT_t^H$. In $\OPT$, $g$ is allocated to some agent $j$ as a light good. Note that $\OPT_t$ is heavy-only. Otherwise, interchange the light good in $\OPT_t$ with $g$ and re-convert $g$ to a heavy good, thus improving the NSW of $\OPT$, a contradiction. Hence, $\uv_t(\OPT_t) + p \leq \uv_t(A_t) \leq \uv_n(A_n)$. 
  
  Since $A$ is not optimal, there is an agent $k$ such that $\uv_k(\OPT_k) > \uv_k(A_k)$. Since $\OPTH_k \subseteq A^H_k$, the bundle $\OPT_k$ contains a good that is light for $k$. 
		
Since $A$ is the output of Algorithm~\ref{phase3}, the while condition of Algorithm~\ref{phase3} (line 1) does not hold,
		and, hence,
		\[ \uv_n(A_n) \leq p \cdot \uv_1(A_1) + p. \]
Therefore, we get
	\[	\uv_t(\OPT_t) + p \le \uv_n(A_n) \le p \cdot \uv_1(A_1) + p \le p \cdot \uv_k(A_k) + p \le p \cdot \uv_k(\OPT_k), \]
and, hence,
		\[\uv_t(\OPT_t) \leq p\cdot \uv_k(\OPT_k) - p.\]
		Moving a light good from $k$ to $j$ and taking $g$ from $j$'s bundle and allocating it as a heavy good to $t$ increases the number of heavy goods allocated to $t$ by one. This reallocation of goods does not decrease the $\NSW$ as
  \[ (\uv_k(\OPT_k) - 1)(\uv_t(\OPT_t) + p) - \uv_k(\OPT_k)\uv_t(\OPT_t)
        = p \cdot \uv_k(\OPT_k) - \uv_t(\OPT_t) - p
        \ge 0. \]
		For the new allocation $\widehat{\OPT}$, we have 
            \[
                \NSW (\widehat{\OPT}) \geq \NSW (\OPT) \hspace{0.5cm} \text{ and } \hspace{0.5cm} \abs{\widehat{\OPT}^H} > \abs{\OPTH}, \]
   a contradiction to the choice of $\OPT$.
	 \end{proof}

	Phase 3 starts with all heavy goods allocated as heavy and ends with an allocation $A$ with $\abs{A^H} \le \abs{\OPTH}$ for some optimal allocation $\OPT$ (Lemma~\ref{oneCase}). Therefore, we have in some round of Phase $3$ an allocation $\Tilde{A}$ with $\abs{\Tilde{A}^H} = \abs{\OPTH}$ for some optimal allocation $\OPT$. Let $K = \abs{\Tilde{A}^H}$. By Lemma~\ref{importantLemma}.\ref{lexmin}, $\Tilde{A}^H$ is lexmin among all heavy-only allocations with the same cardinality, and by Lemma~\ref{heavysim} we may assume the same for $\OPTH$. We get the following Corollary.
	
	\begin{corollary}\label{meetingOPT}
	   % There is an optimal allocation $\OPT$ such that $\Tilde{A}^H = \OPTH$.
        There is an optimal allocation $\OPT$ such that $\Tilde{A}^H$ and $\OPTH$ have the same utility profile.
	\end{corollary}
	
        So far, we have proven that considering only heavy allocated goods in $\Tilde{A}$ and $\OPT$, we end up having the same utility profile. 
        %By \Cref{greedy} and Lemma \ref{importantLemma}.\ref{neatOrder}, in both allocations $\OPT$ and $\Tilde{A}$, light goods are allocated as evenly as possible. So we can conclude that the utility profiles of $\Tilde{A}$ and $\OPT$ are equal. 
        Light goods are allocated in a greedy manner during Phases $2$ and $3$. 
        By Lemma~\ref{greedy} we conclude that $\Tilde{A}$ has maximum $\NSW$. 
        In each round of Phase $3$, the \NSW\ increases, so $\NSW (A) \ge \NSW (\Tilde{A}) = \NSW (\OPT)$.

\IntegerResult*

	\begin{proof}
	   We have already proven that the algorithm outputs an allocation that maximizes \NSW. It remains to prove that it runs in polynomial time. The first phase runs in polynomial time~\cite{BarmanKV18AAMAS}. The second phase clearly takes polynomial time. By Lemma~\ref{importantLemma}.d, the number of heavy goods under $A$ is decreasing after each iteration of the third phase. Therefore, the number of iterations is bounded by the number of heavy goods. Overall, the algorithm terminates in polynomial time. 
\end{proof}

\section{Half-Integral Instances}\label{sec:halfInteger}

We give a polynomial time algorithm for half-integral instances, i.e., $q$ is equal to $2$ and $p$ is an odd integer greater than two. Again, we will first deal with the case that all heavy goods must be allocated as heavy goods (Section~\ref{heavy as heavy}). The algorithm consists of three steps. We first determine the lexmin allocation of the heavy goods and then allocate the light goods greedily. This is as in the integer case.
Let $x$ be the minimum value of any bundle in the resulting allocation. Then only bundles of value $x$, $x + \sfrac{1}{2}$ and $x + 1$ can contain light goods. Call these bundles the small bundles. We optimize the allocation of the small bundles by maximizing the number of bundles of value $x + \sfrac{1}{2}$. Essentially, we try to combine a bundle of value $x$ and a bundle of value $x + 1$ to two bundles of value $x + \sfrac{1}{2}$. The vehicle for doing so are \emph{alternating walks}. Alternating walks are sequences of alternating paths that are glued together at their endpoints. We find alternating walks by establishing a connection to matchings with parity constraints~\cite{Akiyama-Kano}. We develop the theory in Section~\ref{APC exists} and the algorithm in Section~\ref{Algorithmics}. For the algorithm, we exploit known algorithms for maximum matchings with parity constraints. 
In Section~\ref{heavy as light}, we then deal with the situation that heavy goods can be allocated as light goods. We repeatedly take a heavy good from a bundle of highest value and allocate it as a light good to a bundle of minimum value. In contrast to integral instances, it will be necessary to reoptimize after each such move.

Let $s = p/q$. Recall that $q = 2$ and $p$ is an odd integer greater than two.

\subsection{Heavy Goods must be Allocated as Heavy Goods}\label{heavy as heavy}

In this section, \emph{we assume that a heavy good, i.e., a good that is heavy for some agent, is allocated to an agent that considers it heavy.} Therefore, the total value of all bundles is the number of light goods plus $s$ times the number of heavy goods. The algorithm for constructing an optimal allocation consists of the first three steps of Algorithm~\ref{The Algorithm}. The first two steps are as in the integer case. In the next subsections, we give more details.
% \begin{enumerate}
% \item 
% Determine the lexmin allocation of the heavy goods, i.e., push heavy goods towards smaller bundles as much as possible. Let $b_n \ge b_{n-1} \ge \ldots \ge b_1$ be the number of heavy goods allocated to the different agents sorted in decreasing order. Then $b_n$ is minimal among all allocations of the heavy goods, and given that $b_n$ has its minimal value, $b_{n-1}$ is minimal, and so on. In other words, the string $b_nb_{n-1}\ldots b_1$ is lexicographically minimal. 

% \item Allocate the light goods greedily, i.e., allocate the light goods one by one and always add the next good to a bundle of smallest value.
% \item Let $x$ be the minimum value of any bundle in the resulting allocation. Call the bundles of value $x$, $x + \half$, and $x + 1$ the small bundles and let $N_s$ be the owners of the small bundles. Optimize the allocation of the small bundles, i.e., allocate the goods contained in the small bundles to the agents in $N_s$ so as to allocate heavy goods only to agents considering them heavy, give each bundle a value in $\sset{x, x + \half, x + 1}$, and maximize the number of bundles of value $x + \half$.
% \end{enumerate}

\subsubsection{LexMin Allocation of Heavy Goods}

We need the concept of an alternating path. Let $A$ be an allocation, let $A^H$ be its restriction to the heavy edges, and let $E^H$ be the set of all heavy edges.  An alternating path (with respect to $A^H$ and $\bar{A} = E^H \setminus A^H$) uses alternatingly edges\footnote{We will later also consider alternating path with respect to $A$ and $O$ that use alternatingly edges in $A^H$ and $\OH$.} in $A^H$ and $\bar{A}$ and has agents as endpoints. The first and the last edge of the path have different types; one belongs to $A$ and one belongs to $\bar{A}$.\footnote{In Section~\ref{altpath}, we considered alternating paths without this restriction. We do not need this generality anymore.} The endpoint incident to the $A$-edge is called the $A$-endpoint of the path. The other endpoint is then the $\bar{A}$-endpoint. Augmentation of an alternating path to $A^H$ effectively moves a heavy good from the $A$-endpoint to the $\bar{A}$-endpoint.  

The lexmin allocation of the heavy goods is readily computed. Start with an arbitrary allocation $A^H$ of the heavy goods. As long as there is an alternating path starting with an $A$-edge from an agent $i$ to an agent $j$ that owns at least two heavy goods less than $i$, augment the path. This decreases the number of heavy goods at $i$ by one and increases the number of heavy goods at $j$ by one.\footnote{Alternatively, one can use parametric flow~(\cite{ParametricFlow, Darwish-Mehlhorn}). Have an edge of capacity $s$ from an agent to each good that the agent considers heavy, and add a source and a sink. Have an edge of capacity $\lambda$ from the source to each agent, and an edge of capacity $s$ from each good to the sink. Determine max-flows for all values of $\lambda$. For $\lambda \le b_1$ the flow is $\lambda n$. For $b_1 \le \lambda \le b_2$ the max-flow is
$b_1 + \lambda(n-1)$, and so on.}

With respect to a lexmin allocation $A^H$, let $\bar{k}$ (called $b_n$ above) be the maximum number of heavy goods assigned to any agent. Define $R'_{\bar{k} + 1} = \emptyset$ and sets $R_\ell$, $R'_\ell$, and $S_\ell$ of agents for $\ell = \bar{k}$, $\bar{k} -1$, $\bar{k} - 2$, and so on:\vspace{-0.8cm}
\begin{align*}
  R_\ell & = \{ a \, ; \, \parbox[b]{0.52\textwidth}{$a$ owns $\ell$ heavy goods and does not belong to $R'_{\ell + 1}$.}\},\\
  R'_\ell &= \{ a \, ; \, \parbox{0.8\textwidth}{$a$ owns $\ell - 1$ heavy goods and there is an alternating path starting with an }\\
         &\parbox{6.5cm}{\hspace{1cm} $A$-edge from an agent in $R_\ell$ to $a$.} \},\\
  S_\ell &= R_\ell \cup R'_\ell.
\end{align*}\vspace{-1.5cm}

Let $S_{\ge k} = \cup_{\ell \ge k} S_\ell$. Define $S_{> k}$, $S_{\le k}$ and $S_{<k}$ analogously.\footnote{The sets $R_\ell$ and $R'_\ell$ are defined with respect to a particular lexmin allocation $A$. Whether an agent belongs to $R_\ell$ or $R'_\ell$ depends on the particular choice of $A$. However, $S_\ell$ is independent of the particular choice. Also, which goods are assigned to a particular agent depends on the choice of $A$. }

\begin{lemma}\label{Sk} For any $\ell$, agents in $S_\ell$ own either $\ell$ or $\ell - 1$ heavy goods. Each of them could own $\ell$ goods via a transfer from another agent in the set. For any $k$: No good considered heavy by an agent in $S_{<k}$ is assigned to an agent in $S_{\ge k}$, and all goods considered heavy by an agent in $S_{\ge k}$ but by no agent in $S_{< k}$ are assigned as a heavy good to an agent in $S_{\ge k}$. In other words, the heavy goods assigned to the agents in $S_{\ge k}$ are exactly the heavy goods that are considered heavy by an agent in $S_{\ge k}$ but by no agent in $S_{< k}$.\end{lemma}
\begin{proof} By construction, each agent in $R'_k$ can be reached from an agent in $R_k$ by an alternating path starting with an $A$-edge. Augmention of this path increases the number of heavy items owned by the former agent.

  We show next that there is no alternating path starting with an $A$-edge from an agent in $S_{\ge k}$ to an agent in $S_{<k}$. Assume otherwise. Let $P$ be a such a path, and let agent $a$ be the start vertex of $P$.  If $a \in R'_\ell$ for some $\ell \ge k$, there is an alternating path $Q$ starting in an agent in $R_\ell$ and ending in $a$. Consider the concatenation of $Q$ and $P$ instead of $P$. So, we may assume that the path starts in an agent in $R_\ell$ for some $\ell \ge k$. If the agent at the other end of the path owns $\ell - 1$ heavy goods, it belongs to $S_\ell$. If the agent at the other end of the path owns at most $\ell - 2$ heavy goods, augmentation of the path to $A$ results in an allocation that is lexicographically smaller than $A$, a contradiction. 

  If an agent in $S_{<k}$ considered an item owned by an agent $S_{\ge k}$ heavy, we would have an alternating path starting with an $A$-edge from the latter agent to the former. Thus, no agent in $S_{< k}$ considers heavy any good that is assigned to an agent in $S_{\ge k}$, and all goods considered heavy by an agent in $S_{\ge k}$ but by no agent in $S_{< k}$ are assigned to the former agents.\end{proof}

We use $H$ to denote the lexmin allocation of the heavy goods.

% \begin{figure}
%  \centering \includegraphics[width=0.7\textwidth,clip, trim=0 500 0 70]{allocationLemma32.pdf}
% \end{figure}

\subsubsection{Assignment of Light Goods and Optimization of Small Bundles}\label{APC exists}

We begin by providing a characterization of an optimal allocation (under the restriction that any heavy good is  allocated to an agent that considers it heavy). We shall show how to efficiently compute such an optimal allocation in Section~\ref{Algorithmics}. We add the light goods greedily to the lexmin allocation of the heavy goods, i.e., we add the light goods one by one and always to a bundle of minimum weight. Let $x$ be the minimum weight of a bundle after the addition of the light goods. Then, only bundles of weight $x$, $x + \half$, and $x + 1$ contain light goods. The assignment of the light goods is somewhat arbitrary in the following sense: If a bundle of value $x + 1$ contains a light good, we may move this light good to a bundle of value $x$, thus interchanging the values of the bundles.

Let $k_0$ be minimal such that $k_0 s > x + 1$. Then, $(k_0 - 1) s \le x + 1$, and, hence, the bundles in $R'_{k_0}$ may contain light goods. They contain at most $\floor{s}$ light goods as $(k_0 - 1)s + \floor{s} + 1 > k_0s > x + 1$ and bundles of value greater than $x +1$ contain no light good; $\floor{s}$ goods are contained if $k_0 s = x + \sfrac{3}{2}$ and the value of the bundle is $x + 1$.

At this point, we have bundles of value $x$, $x + \half$, $x + 1$, $k_0s$, $(k_0 + 1)s$, \ldots\;. We use $B$ to denote the allocation after the greedy addition of the light goods. Let us call the bundles of value at most $x + 1$ the \emph{small bundles}, the bundles of value $k_0s$ and more the \emph{large bundles}, let us write $N_s$ for the owners of the small bundles, and let $\Gamma(N_s)$ be the goods owned by them. The agents in $R'_{k_0}$ belong to $N_s$, and the agents in $R_{k_0}$ do not belong to $N_s$. 

No allocation can assign more heavy goods to the agents in $N_s \cup R_{k_0}$ as $B$ does. Also $B$ allocates all light goods to the agents in $N_s$. So no allocation can assign more value to the agents in $N_s \cup R_{k_0}$ as $B$ does. If all bundles of $B$ in $N_s$ have value $x$, $B$ is clearly optimal. So we may assume that the average value of the bundles in $N_s$ lies strictly between $x$ and $x + 1$. 

\paragraph{The Optimal Allocation:} We optimize the allocation of the small bundles by reassigning the goods contained in them so as to maximize the number of bundles of value $x + \half$. More precisely, we consider an allocation of the goods in $\Gamma(N_s)$ to the agents in $N_s$ such that \label{characterization of the optimal allocation}
\begin{enumerate}
\item each heavy good in $\Gamma(N_s)$ is allocated to an agent in $N_s$ that considers it heavy,
\item all bundles have value $x$, $x + \half$, or $x + 1$, and 
\item the number of bundles of value $x + \half$ is maximum;
\item subject to the number of bundles of value $x + \half$ being maximum, the number of not-heavy-only bundles of value $x + 1$ is maximum.  
\end{enumerate}
We use $A$ to denote the allocation obtained after optimization of the small bundles. It consists of $H$ restricted to the bundles of value $k_0s$ and more plus the optimized allocation, call it $C$, to the agents in $N_s$. \emph{We will show that $A$ is optimal and can be computed in polynomial time.} (Theorem~\ref{main theorem: heavy as heavy} and Lemma~\ref{running time}.)\medskip

Let $n_0$, $\nhalf$, and $n_1$ be the number of bundles in $C$ of value $x$, $x + \half$, and $x + 1$, respectively, let $S$ be their total value, and let $n_s = \abs{N_s}$.  Then, $n_s = n_0 + \nhalf + n_1$, and $S = n_0x + \nhalf (x + \half) + n_1 (x + 1)$, and hence $S - n_sx= \nhalf/2  + n_1$. Thus, $n_1 = (S - n_sx) - \nhalf/2$, and $n_0 = n_s - \nhalf - n_1 = (n_s(x + 1) - S) - \nhalf/2$. The $\NSW$ of $C$ is, therefore,
\begin{equation}\label{large mhalf is good} x^{n_0} (x + \half)^{\nhalf} (x + 1)^{n_1} = x^{n_s(x+1) - S}(x + 1)^{S - n_sx} [(x + \half)^2/(x(x+1))]^{\nhalf/2}.
  \end{equation}
  Maximizing the number of bundles of value $x + \half$ in $C$ is, therefore, tantamount to maximizing the $\NSW$ of $C$. It is, however, not clear at this point that an optimal allocation must satisfy Condition 2.

An efficient algorithm for constructing $C$ will be discussed in Section~\ref{Algorithmics}. Let $O$ be an optimal allocation closest to $A$, i.e., with minimal cardinality of $A^H \oplus O^H$, and let $x_O$ be the minimum value of a bundle in $O$. We decompose $A^H \oplus O^H$ into alternating paths. A good has degree zero or two in $A^H \oplus O^H$; recall that we are working under the assumption that heavy goods must be allocated as heavy goods. If the degree of a good is two, one path goes through the good. For an agent $v$, we pair $A$- and $O$-edges in $A^H \oplus O^H$ as much as possible and leave the remaining $A$- or $O$-edges incident to $v$  as starting edges of paths. Let $\hdeg_O(i)$ and $\hdeg_A(i)$ be the number of heavy goods owned by $i$ in $O$ and $A$, respectively. Let $w_i^O$ and $w_i^A$ be the value of the bundles owned by $i$ in $O$ and $A$, respectively. The \emph{utility profile} of an allocation is the multiset of the values of its bundles. 

\begin{lemma}\label{x0} $x_O \le x + \sfrac{1}{2} \le k_0s - 1$. Bundles of value more than $k_0s$ contain no light good in $O$.  \end{lemma}
\begin{proof} Let $T_{k_0}$ be the heavy items owned by the agents in $S_{k_0}$ in $A$. In $A$,  the agents in $R'_{k_0}$ own $k_0 - 1$ heavy items in $T_{k_0}$ and the agents in $R_{k_0}$ own $k_0$ such items. Only agents in $S_{\ge k_0}$ consider goods in $T_{k_0}$ heavy (Lemma~\ref{Sk}), and, hence, $O$ must assign these goods to the agents in $S_{\ge k_0}$. It may assign some of these goods to agents in $S_{> k_0}$ and it may assign more than $k_0$ of these items to agents in $S_{k_0}$.  Let $R''_{k_0}$ be the $\abs{R'_{k_0}}$ agents in $S_{k_0}$ that own the least number of heavy goods in $O$; ties are broken arbitrarily. These agents together own at most as many heavy goods in $O$ as the agents in $R'_{k_0}$ in $A$.
Therefore, the total weight assigned by $O$ to the agents in $(N_s \setminus R'_{k_0}) \cup R''_{k_0}$ is at most the total weight assigned by $A$ to the agents in $N_s$. The average weight assigned by $A$ to an agent in $N_s$ is less than $x + 1$ and hence the same holds for the agents in $(N_s \setminus R'_{k_0}) \cup R''_{k_0}$ in $O$. 

If a bundle of value more than $k_0s$ contained a light good, moving the light good to a bundle of value $x_O$ would improve the $\NSW$ of $O$. \end{proof}

 We will first show that $O$ and $A$ agree on the heavy bundles. More precisely, 

\begin{lemma}\label{agree on heavy} $A^H$ and $O^H$ agree on $S_{\ge k_0}$. Bundles in $N_s$ contain at most $k_0 - 1$ heavy goods in $O$. \end{lemma}
\begin{proof}
We start with the first claim. Assume otherwise, and let $k \ge k_0$ be maximal such that $A^H$ and $O^H$ do not agree on the agents in $S_k$. Since $A^H$ and $O^H$ agree on all agents in $S_{>k}$, and since the heavy goods assigned by $A^H$ to the agents in $S_{\ge k}$ must be assigned to agents  in $S_{\ge k}$ in any allocation (Lemma~\ref{Sk}), the heavy goods assigned to the agents in $S_k$ by $A$ must also be assigned to them by $O$, i.e., $\cup_{i \in S_k} A_i^H \subseteq \cup_{i \in S_k} O_i^H$. Assume first, that $\hdeg_O(i) = \hdeg_A(i)$ for all $i \in S_k$. Then, $O_i^H = A_i^H$ for all $i \in S_k$, since $O^H$ is closest to $A^H$. Note, that we may change $O_i^H$ to $A_i^H$ for all $i \in S_k$ without affecting the NSW of $O$.  So, there must be an $i \in S_k$ that owns more heavy goods in $O$ than in $A$, i.e., $\hdeg_O(i) > \hdeg_A(i) \ge k-1$, and, hence, the heavy value of $i$ in $O$ is at least $k_0s$. Thus, $O_i$ contains no light good as, otherwise, the value of $O_i$ would be larger than $k_0s$, a contradiction to Lemma~\ref{x0}. Let $P$ be a maximal alternating path starting with an $O$-edge from $i$, and let $j$ be its other endpoint. Then, $P$ ends with an $A$-edge in $j$ and $\hdeg_O(j) < \hdeg_A(j)$. Let $\ell$ be the number of light goods owned by $j$. 

  Case $\hdeg_O(j) = \hdeg_O(i) - 1$: Augmenting the path to $O$ and also interchanging the light goods owned by $i$ and $j$ does not change the utility profile and, hence, the $\NSW$ of $O$. The reallocation moves $O$ closer to $A$, a contradiction.

  Case $\hdeg_O(j) \le \hdeg_O(i) - 2$: If $w_j^O < w_i^O$, we augment the path and  move $\min(\ell,\floor{s})$ light goods from $j$ to $i$. This moves $\OH$ closer to $\AH$ and does not decrease the $\NSW$ of $O$ (it improves it if $w_j^O \le w_i^O - 1$). If $w_j^O \ge w_i^O$, $j$ owns at least $2s$ light goods.  Since $w^O_i \ge k_0s$, and bundles of value larger than $k_0s$ contain no light good, we have $w_j^O = w_i^O = k_0s$ and $k_0s \le x_O + 1$. Thus, $k_0s = x_O + 1$. We move a light good from $j$ to a bundle of value $x_O$, augment the path, and move $\floor{s}$ light goods from $j$ to $i$. This improves the $\NSW$ of $O$, a contradiction. 

Case $\hdeg_O(j) \ge \hdeg_O(i)$: Then, $\hdeg_A(j) > \hdeg_O(j) \ge \hdeg_O(i) > \hdeg_A(i) \ge k - 1$. Thus,
$\hdeg_A(j) \ge \hdeg_A(i) + 2$, and there is an alternating path starting with an $A$-edge from $j$ to $i$. Thus, $A$ is not lexmin, a contradiction.
This completes the proof of the first claim.

For the second claim, assume that there is an agent $i \in N_s$ that owns $k_0$ or more heavy goods in $O$. Since $i$ owns at most $k_0 - 1$ heavy goods in $A$, we have $\hdeg_O(i) \ge k_0 > \hdeg_A(i)$. As above, we conclude that $O_i$ contains no light good and consider an alternating path $P$ starting with an $O$-edge in $i$. Let $j$ be the other endpoint. Then, $\hdeg_O(j) < \hdeg_A(j)$.

If $\hdeg_O(j) < hdeg_O(i)$, we argue as above. If $\hdeg_O(j) \ge \hdeg_O(i) \ge k_0$, then $\hdeg_A(j) > \hdeg_O(j) \ge \hdeg_O(i) > \hdeg_A(i)$, $P$ is an alternating path starting with an $A$-edge from $j$ to $i$, and $A$ is not lexmin.
This completes the proof of the second claim.
\end{proof}

\begin{lemma} $A$ and $O$ agree on the bundles $R_{k_0} \cup S_{> k_0}$, i.e., on the agents owning $k_0$ or more heavy goods in $A$. \end{lemma}
\begin{proof} By the preceeding Lemma, they agree on the heavy goods. Also, none of the bundles contains a light good in either $A$ or $O$.  \end{proof}

We still need to show that $A$ is the optimal allocation for the agents in $N_s$. \emph{Since we know at this point that $A$ and $O$ agree on the large bundles, we use $A$ and $O$ in the sequel for the bundles of $A$ and $O$ allocated to the agents in $N_s$}.

\begin{lemma}\label{less-more} If $A$ is suboptimal, there are more bundles of value $x + \sfrac{1}{2}$ in $O$ than in $A$ and fewer bundles of value $x + 1$.  \end{lemma}
\begin{proof} Let $m_d$ and $n_d$ be the number of bundles of value $x + d$ in $A$ and $O$, respectively. 
Only $m_0$, $\mhalf$, and $m_1$ are non-zero. There might be non-zero $n_d$ for $d < 0$ and $d > 1$. For the sake of a contradiction, assume $\nhalf \le \mhalf$. We massage $O$ by moving portions of $\half$ around; we maintain the number of bundles and their total value, but do not insist that $O$ stays a feasible allocation. Each move will strictly increase the $\NSW$, and we will end up with $A$. Thus $O = A$.

We observe first that the sums $\sum_{d \ge \sfrac{1}{2}} n_d$ and $\sum_{d \le \sfrac{1}{2}} n_d$ must both be non-zero, as, otherwise, the average weight of a bundle in $O$ would be at most $x$ or at least $x + 1$.

If $n_d = 0$ for $d < 0$ and $d > 1$, $\nhalf \le \mhalf$ implies $\NSW(O) \le \NSW(A)$ by equation (\ref{large mhalf is good}). If $n_d > 0$ for some $d < 0$ or $d > 1$, one of the following rules improves the $\NSW$ of $O$ and maintains $\nhalf \le \mhalf$.

If for some half-integers $e$ and $d$, $e + 1 \le d$ and $n_e > 0$ and $n_d > 0$, decrease $n_e$ and $n_d$ by one each and increase $n_{e + \sfrac{1}{2}}$ and $n_{d - \sfrac{1}{2}}$ by one each, except if this would increase $\nhalf$ above $\mhalf$.

If this rule is not applicable, we either have $n_d = 0$ for $d < 0$ and $d > 1$ and are done, or $\nhalf = 0 = \mhalf$, and $n_d = 0$ for $d \le -\sfrac{1}{2}$ or $d \ge \sfrac{3}{2}$, and $n_d$ is non-zero for some $d \not\in [0,1]$. If there are two values of $d$ outside $[0,1]$ with positive $n_d$, the values differ by $\sfrac{1}{2}$, as, otherwise, the rule applies.  Since the total value of the bundles in $A$ and $O$ must be the same\footnote{Recall that we are still operating under the assumption that heavy goods must be allocated to agents that consider them heavy.}, we have $
-\sum_{\ell \ge 1} \sfrac{\ell}{2} n_{-\sfrac{\ell}{2}} + 0n_0 + n_1 + \sum_{k \ge 3} \sfrac{k}{2} n_{\sfrac{k}{2}} = 0m_0 + m_1$, and either the sum indexed by $\ell$ or the sum indexed by $k$ is zero, and in any sum, only two consecutive terms can be non-zero. Thus, $n_{-\sfrac{\ell}{2}}$ and $n_{\sfrac{k}{2}}$ are even for odd $\ell$ and odd $k$.

In order to proceed, we need the following well-known inequality. For completeness, we include a proof as a footnote. 

\begin{claim}\label{useful ineq} Let $a_1$,\ldots,$a_s$, $b_1$, \ldots, $b_t$ be positive reals, let $\varepsilon_1$, \ldots, $\varepsilon_s$ and $\delta_1$,\ldots, $\delta_t$ be non-negative reals with $\sum_i \varepsilon_i = \sum_j \delta_j$ and $a_i + \varepsilon_i \le b_j - \delta_j$ for all $i$ and $j$. Then,
  \[    \prod_i(a_i + \varepsilon_i) \prod_j (b_j - \delta_j) \ge \prod_i a_i \prod_j b_j \]
  with strict inequality if some $\varepsilon_i$ or $\delta_j$ is non-zero.\footnote{Let $a$ and $b$ be positive reals with $a < b$, and let $\varepsilon \in [0,b-a]$. Then, $(a + \varepsilon)(b - \varepsilon) = ab + \varepsilon (b - a - \varepsilon) \ge ab$. The inequality is strict if $\varepsilon \in (0,b-a)$.

To prove the claim, we use this inequality repeatedly. If all $\varepsilon_i$ and $\delta_j$ are zero, the claim is obvious. Otherwise, after renumbering, we may assume $\varepsilon_1 > 0$ and $\delta_1 > 0$. We also assume $\varepsilon_1 \le \delta_1$, the other case being symmetric. Then,
\[ \frac{(a_1 + \varepsilon_1)(b_1 - \delta_1)}{a_1b_1} = \frac{(a_1 + \varepsilon_1)(b_1 - \varepsilon_1)}{a_1b_1} \cdot \frac{b_1 - \varepsilon_1 - (\delta_1 - \varepsilon_1)}{b_1 - \varepsilon_1} > \frac{b_1 - \varepsilon_1 - (\delta_1 - \varepsilon_1)}{b_1 - \varepsilon_1}.\]
We are now left with the claim for $a_1$ to $a_s$, $b_1 - \varepsilon_1$, $b_2$, to $b_t$, and $0$, $\varepsilon_2$, \ldots, $\varepsilon_s$, $\delta_1 - \varepsilon_1$, $\delta_2$, \ldots, $\delta_t$, and a simple induction completes the proof.}\end{claim}

If $n_{-\sfrac{\ell}{2}}$ is positive for some odd $\ell$, we decrease it by two, decrease $n_1$ by one (must be positive), increase $n_0$ by one, and increase $n_{-\sfrac{(\ell - 1)}{2}}$ by two. This leaves the number of bundles and their total value unchanged and increases $\NSW$ by an application of Claim~\ref{useful ineq} with $a_1 = a_2 = x - \sfrac{\ell}{2}$, $b_1 = x + 1$, $\varepsilon_1 = \varepsilon_2 = \sfrac{1}{2}$, and $\delta_1 = 1$. 

If $n_{-\sfrac{\ell}{2}}$ is positive for some even $\ell$, we decrease it by one, decrease $n_1$ by one (must be positive), increase $n_0$ by one, and increase $n_{-\sfrac{(\ell - 2)}{2}}$ by one. This leaves the number of bundles and their total value unchanged and increases $\NSW$ by an application of Claim~\ref{useful ineq} with $a_1 = x - \sfrac{\ell}{2}$, $b_1 = x + 1$, $\varepsilon_1 = 1$, and $\delta_1 = 1$.

Analogously, if $n_{\sfrac{k}{2}}$ is positive for some odd $k$, we decrease it by two, decrease $n_0$ by one (must be positive), increase $n_1$ by one, and increase $n_{\sfrac{(k - 1)}{2}}$ by two. This leaves the number of bundles and their total value unchanged and increases $\NSW$ by an application of Claim~\ref{useful ineq} with $a_1 = x$, $b_1 = b_2 = x + \sfrac{k}{2}$, $\varepsilon_1 = 1$, and $\delta_1 = \delta_2 = \sfrac{1}{2}$. 

If $n_{\sfrac{k}{2}}$ is positive for some even $k$, we decrease it by one, decrease $n_0$ by one (must be positive), increase $n_1$ by one, and increase $n_{\sfrac{(k - 2)}{2}}$ by one. This leaves the number of bundles and their total value unchanged and increases $\NSW$ by an application of Claim~\ref{useful ineq} with $a_1 = x$, $b_1 = x + \sfrac{k}{2}$, $\varepsilon_1 = 1$ and $\delta_1 = 1$.

Continuing in this way, we turn $O$ into an allocation with $n_{1/2} \le m_{1/2}$, $0n_0 + \sfrac{1}{2} n_{\sfrac{1}{2}} + n_1 = 0\cdot m_0 + \sfrac{1}{2} m_{\sfrac{1}{2}} + m_1$, and all $n_d$ with $d < 0$ or $d > 1$ being zero. Equation~\ref{large mhalf is good} on page~\pageref{large mhalf is good} then implies that $A$ is optimal, a contradiction.

 A similar argument shows that there are fewer bundles of value $x + 1$ in $O$ than in $A$. Assume, for the sake of a contradiction, $n_1 \ge m_1$. If $n_e > 0$ and $n_d > 0$ for some $e < 1$ and $d > 1$, we decrease both counts by one and increase $n_{e + \sfrac{1}{2}}$ and $n_{d - \sfrac{1}{2}}$ by one each. So, we may assume that either $n_e = 0$ for all $e < 1$ or $n_d = 0$ for all $d > 1$. The former is impossible, as the average weight of a bundle would then be at least $x + 1$. So, $\sum_{e < 1} n_e > 0$ and $\sum_{d > 1} n_d = 0$. Assume $n_e > 0$ for some $e < 0$. If $n_{\sfrac{1}{2}} > 0$, we decrease $n_e$ and $n_{\sfrac{1}{2}}$ by one each and increase $n_{e + \sfrac{1}{2}}$ and $n_0$ by one each. If $n_{\sfrac{1}{2}} = 0$, we have $n_1 > m_1$ since $\sum_d d n_d = \sum_d d m_d$, and we decrease $n_e$ and $n_1$ by one each and increase $n_{e + \sfrac{1}{2}}$ and $n_{\sfrac{1}{2}}$ by one each, improving the NSW and maintaining $n_1 \ge m_1$. In this way, all bundles have values in $\sset{x, x + \sfrac{1}{2}, x + 1}$. Since the number of bundles and their total value is fixed, having more bundles of value $x + \sfrac{1}{2}$ implies having fewer bundles of value $x$ and $x + 1$. 
  \end{proof}

\newcommand{\rhalf}{r_{\sfrac{1}{2}}}
\begin{lemma}\label{max number of heavy} Let $r_d$ be the maximum number of heavy goods that a bundle of value $x + d$ may contain. Then, $r_d = \max\set{r \in \NN_0}{x + d -sr \in \NN_0}$, and
  \begin{itemize}
  \item $r_0 = \rhalf - 1$ and $r_1 = \rhalf + 1$ if $x + 1$ is an integer multiple of $s$.
  \item $r_1 = r_0 = \rhalf + 1$ if $x + 1$ is not an integer multiple of $s$ and $x + 1 > (\rhalf + 1) s$.
  \item $r_1 = r_0 = \rhalf - 1$ if $x + 1$ is not an integer multiple of $s$ and $x + 1 < (\rhalf + 1) s$.
  \end{itemize}
    \end{lemma}
    
    \begin{proof} Observe first that $r_0 \le r_1 \le r_0 + 2$.  The first inequality is obvious, as adding a light to a bundle of value $x + d$ yields a bundle of value $x + d + 1$. The second inequality holds because we may remove two heavy goods from the heavier bundle and add $2s - 1$ light goods. Also, $r_0$ and $r_1$ have the same parity, since $x + d$ and $x + d+ 1$ are either both integral or both half-integral.  Finally $r_{\sfrac{1}{2}} - r_0 = \pm 1$ since the two numbers have different parity, and we can switch between the two values by exchanging a heavy good by either $\floor{s}$ or $\ceil{s}$ light goods.

 Let $x + \sfrac{1}{2} = \rhalf s + y$ with $y \in \NN_0$. Then, $x + 1 = \rhalf s + y + \sfrac{1}{2}$. If $y + \sfrac{1}{2} = s$, $r_1 = \rhalf +1$. Also, $r_0 < r_1$, and, hence, $r_0 = \rhalf-1$. If $y + \sfrac{1}{2} > s$, then $y - \sfrac{1}{2} \ge s$, and, therefore, $r_1 = r_0 = \rhalf + 1$. If $y + \sfrac{1}{2} < s$, then $x + 1 = (\rhalf - 1)s + (s + y + \sfrac{1}{2})$, and, hence, $x = (\rhalf - 1)s + (s + y - \sfrac{1}{2})$, and, therefore, $r_1 = r_0 = \rhalf - 1$. 
\end{proof}

The following Lemma is useful for arguing that a change of allocation is feasible. 

\begin{lemma}\label{global accounting} Let $S$ be the total value of all goods.\footnote{Recall that in this section, heavy goods must be allocated as heavy goods. So, $S$ is equal to the number of light goods plus $s$ times the number of heavy goods.} For each agent $i$, let $t_i$ be a target value for the agent such that $S = \sum_i t_i$, and let $D^H$ be an allocation of the heavy goods such that $t_i - \abs{D^H_i} s$ is a non-negative integer. Then, the light goods can be added to the bundles so as to give each bundle its target value, and all light goods are allocated. \end{lemma}
\begin{proof} Let $h_i = \abs{D^H_i}$. Then, the number of available light goods is $S - \sum_i h_i s$. We need $t_i - h_is$ light goods for agent $i$, and, hence, $\sum_i (t_i - h_i s) = S - \sum_i h_i s$ altogether. \end{proof}

For an agent $i$, $w^A_i$ is the value of $i$'s bundle in allocation $A$ for $i$. If the allocation is clear from the context, we drop the superscript.

\begin{lemma}\label{simple observation} Consider any allocation, agents $i$ and $j$, and assume $\hdeg(i) < \hdeg(j)$ and $w_i \not= w_j$. Let $\ell$ be the total number of light goods in the bundles of $i$ and $j$, and consider a good in the bundle of $j$ that both $i$ and $j$ consider heavy. Moving the heavy good from $j$ to $i$, and then adding the light goods greedily does not decrease the $\NSW$. \end{lemma}
\begin{proof} The weight difference before the reallocation is at least $\sfrac{1}{2}$. The difference in heavy weight changes by $2s$. The absolute difference in heavy weight decreases, except if $\hdeg(i) + 1 = \hdeg(j)$ when it is unchanged. Let $d$ be the difference in heavy weight after the reallocation of the heavy good. If $d > \ell$, the greedy allocation reduces the weight difference to $d - \ell$, and the difference before the reallocation was at least as large. If $d \le \ell$ and $d$ is half-integral, the greedy allocation reduces the weight difference to $\sfrac{1}{2}$. The  weight difference before was at least $\sfrac{1}{2}$. If $d$ is integral, the greedy allocation reduces the weight difference to either zero or one. The weight difference before was at least one since $w_i \not= w_j$ and the difference in heavy weight was integral before the reallocation. \end{proof}

\newcommand{\tw}{\mathit{tw}}  \newcommand{\SA}{S^A} \newcommand{\SO}{S^O}

We use $\SA_d$ and $\SO_d$ for the set of agents owning bundles of value $x + d$ in $A$ and $O$, respectively. 

  \begin{lemma}\label{xo}
    $x - \sfrac{1}{2} \le x_O \le x + \sfrac{1}{2}$. The bundles in $O$ have values in $\sset{x - \sfrac{1}{2}, x, x + \sfrac{1}{2}, x + 1}$. 
\end{lemma}
\begin{proof} $x_O \le x + \sfrac{1}{2}$ according to Lemma~\ref{x0}.
Assume $x_O \le x - 1$.  In $O$, all light goods are contained in bundles of value at most $x_O + 1$, and, hence, bundles of value larger than $x$ are heavy-only. Any bundle contains at most $k_0 - 1$ heavy goods (Lemma~\ref{agree on heavy}) and, hence, has heavy value at most $(k_0 - 1)s$.
Let $y = (k_0 - 1)s$. If $y \le x$, the average value of a bundle in $O$ is strictly less than $x$, since there is a bundle of value $x_O$ and all bundles have value at most $x$. But the average value of a bundle in $A$ is at least $x$, a contradiction. So $y \in \{x + \sfrac{1}{2}, x + 1\}$. In $O$, bundles of value $y$ are heavy-only. Let $U$ be the set of owners of the bundles of value $y$ in $O$. If their bundles in $A$ have value $y$ or more, the average value of a bundle in $A$ is larger than the average value in $O$, a contradiction. Note that bundles in $N_s \setminus U$ have value at least $x$ in $A$, have value at most $x$ in $O$, and there is a bundle of value $x_O$ in $O$.

  So, there is an agent $i \in U$ whose bundle in $A$ has value less than $y$. Then $\hdeg_A(i) \le k_0 - 2 < \hdeg_O(i)$. Consider an alternating path starting with an $O$-edge incident to $i$. It ends in a node $j$ with $\hdeg_O(j) < \hdeg_A(j) \le k_0 - 1$. Then, $\hdeg_O(j) <\hdeg_O(i)$, and the value of $O_j$ is at most $x$, and, hence, is less than $y$. Augmenting the path to $O$ effectively moves a heavy good from $i$ to $j$ and decreases the distance between $A^H$ and $\OH$. Adding the light goods in $O_i \cup O_j$ greedily will result in an allocation whose $\NSW$ is at least the one of $O$ (Lemma~\ref{simple observation}), a contradiction. 

We have now established $x - \sfrac{1}{2} \le x_O \le x+ \sfrac{1}{2}$. It remains to show that bundles in $O$ have value at most $x + 1$. Assume there is a bundle of value $x + \sfrac{3}{2}$ or more in $O$. By Lemma~\ref{agree on heavy}, the bundle
contains at most $k_0 - 1$ heavy goods, and, hence, its heavy value is at most $x + 1$. So, it contains a light good, and, hence, $x_O = x + \sfrac{1}{2}$, the bundle under consideration has value $x + \sfrac{3}{2}$, and the bundles in $O$ have values in $\{x + \sfrac{1}{2}, x + 1, x + \sfrac{3}{2}\}$. Any bundle of value $x + \sfrac{1}{2}$ can be turned into a bundle of value $x +\sfrac{3}{2}$ by moving a light good to it from a bundle of value $x + \sfrac{3}{2}$.

If $A$ is optimal, $O^H = A^H$, as $O$ is closest to $A$. Hence, $x_O = x_A = x$, and there are no bundles of value $x + \sfrac{3}{2}$ in $O$. If $A$ is sub-optimal, there are more bundles of value $x + \sfrac{1}{2}$ in $O$ than in $A$ (Lemma~\ref{less-more}). So, there is an $i$ in $(\SO_{\sfrac{1}{2}} \cup \SO_{\sfrac{3}{2}}) \cap (\SA_0 \cup \SA_1)$. Then, the parities of $\hdeg_A(i)$ and $\hdeg_O(i)$ are different.  We may assume $i \in \SO_{\sfrac{3}{2}}$.

Assume first that $\hdeg_O(i) > \hdeg_A(i)$. Then, there exists an alternating path starting in $i$ with an $O$-edge. The path ends in $j$ with $\hdeg_O(j) < \hdeg_A(j) \le k_0 - 1$. The heavy value of $O_j$ is at most  $(k_0 - 2)s$, which is at most $x + 1 - s$. Since the value of $O_j$ is at least $x_O$, $O_j$ contains at least $x_O - (x + 1 - s) = s - \sfrac{1}{2} = \floor{s}$ light goods. 
If $w_j^O = x + \sfrac{1}{2}$, we augment the path, move $\floor{s}$ light goods  from $j$ to $i$ and improve $O$, a contradiction. If $w_j^O = x + 1$, we augment the path and move $\floor{s}$ light goods from $j$ to $i$. This does not change the utility profile of $O$ and moves $O$ closer to $A$, a contradiction. If $w_j^O = x + \sfrac{3}{2}$, $j$ contains at least $\ceil{s}$ light goods. We augment the path, move $\floor{s}$ light goods from $j$ to $i$ and one light good from $j$ to a bundle of value $x_O$. This improves $O$, a contradiction. 

Assume next that $\hdeg_O(i) < \hdeg_A(i) \le k_0 - 1$. Then, there exists an alternating path starting in $i$ with an $A$-edge. The path ends in $j$ with $\hdeg_A(j) < \hdeg_O(j) \le k_0 - 1$. Since $O_i$ has value $x + \sfrac{3}{2}$ and $\hdeg_O(i) \le k_0 - 2$, $O_i$ contains at least $x + \sfrac{3}{2} - (x  + 1 - s) = \ceil{s}$ light goods. We augment the path to $O$ and remove $\ceil{s}$ light goods from $O_i$. So, the value of $O_i$ becomes $x + 1$. If $O_j$ has value $x + \sfrac{3}{2}$, we put $\floor{s}$ light goods on $j$ and one light good on a bundle of value $x_O$. If $O_j$ has value $x + 1$ or $x + \sfrac{1}{2}$, we put $\ceil{s}$ light goods on $j$. This either improves $O$ or does not change the utility profile of $O$ and moves $O$ closer to $A$, a contradiction. 
\end{proof}

\begin{lemma}\label{containsalight}\label{$x + 1$ contains a light}\label{heavy-only $x + 1$} 
Consider a heavy-only agent $i \in \SA_1$. 
Then, $O_i$ is heavy-only and has value $x + 1$. If all bundles in $\SA_1$ are heavy-only, $A$ is  optimal.
\end{lemma}
\begin{proof} Consider a heavy-only bundle $A_i$ of value $x + 1$. Then, $\hdeg_A(i) = k_0 - 1$, and $x + 1 = (k_0 - 1)s$. Assume for the sake of a contradiction that $O_i$ has either value less than $x +1$ or is not heavy-only. In either case, $\hdeg_A(i) > \hdeg_O(i)$, and, hence, $O_i$ contains at most $k_0 - 2$ heavy goods and thus at least $w_i^O - (x + 1 - s) = w_i^O - x - 1 + s$ light goods. If $i \in \SO_0 \cup \SO_1$, $O_i$ contains at most $k_0 - 3$ heavy goods as the parity of the number of heavy goods is the same as for $A_i$. 

  Consider an alternating path starting in $i$ with an $A$-edge, and let $j$ be the other end of the path. Then, $\hdeg_A(j) < \hdeg_O(j)$, and, hence, $A_j$ contains at most $k_0 - 2$ heavy goods; its heavy value is, therefore, at most $x + 1 - s$. Since the value of $A_j$ is at least $x$, $A_j$ contains  at least $\floor{s}$ light goods. At least $\ceil{s}$, if $A_j$ has value $x + 1$.

  If $w_j^A \in \sset{x,x+1}$, we augment the path to $A$, move $\floor{s}$ light goods from $j$ to $i$, and if $w_j^A = x + 1$, in addition, we move one light good from $j$ to a bundle of value $x$. In this way, we increase the number of bundles in $\Ahalf$ by two, a contradiction to property (3). 

  If $w_j^A = x + \sfrac{1}{2}$ and $A_j$ contains at least $\ceil{s}$ light goods\footnote{It will contain at least $2s + \floor{s}$ light goods.}, we augment the path to $A$ and move $\floor{s}$ light goods from $j$ to $i$. In this way, $i$ and $j$ interchange values, and $A_j$ becomes a bundle of value $x + 1$ containing a light good, a contradiction to property (3).

If $w_j^A = x + \sfrac{1}{2}$ and  $A_j$ contains exactly $\floor{s}$ light goods, it contains exactly $k_0 - 2$ heavy goods. Then, $O_j$ contains $k_0 - 1$ heavy goods and, hence, is a heavy-only bundle of value $x + 1$. If the value of $O_i$ is $x - \sfrac{1}{2}$, $O_i$ contains at least $s - \sfrac{3}{2}$ light goods. We augment the path to $O$ and move $s - \sfrac{3}{2}$ light goods from $i$ to $j$. This does not change the utility profile of $O$ and moves $O$ closer to $A$, a contradiction. If the value of $O_i$ is either $x$ or $x + \sfrac{1}{2}$, $O_i$ contains at least $\floor{s}$ light goods. We augment the path to $O$ and move $\floor{s}$ light goods from $i$ to $j$. This improves the $\NSW$ of $O$ if the value of $O_i$ is $x$. It does not change the utility profile of $O$ and moves $O$ closer to $A$ otherwise. If the value of $O_i$ is $x + 1$, $O_i$ contains at least $\ceil{s}$ light goods. We augment the path to $O$, move $\floor{s}$ light goods  from $i$ to $j$ and one light good from $i$ to a bundle of value $x$. This improves $O$. In either case, we have obtained a contradiction. 

We have now established that if $A_i$ is a heavy-only bundle of value $x + 1$, then $O_i$ is a heavy-only bundle of value $x + 1$. Therefore, if all bundles in $\SA_1$ are heavy-only, $\SA_1 \subseteq \SO_1$, and, hence, $A$ is optimal by Lemma~\ref{less-more}.
 \end{proof}

\newcommand{\Opm}{\SO_{\pm \sfrac{1}{2}}}

  We next decompose $A^H \oplus \OH$ into \emph{walk}s. Walks were previously used in studies of parity matchings~\cite{Akiyama-Kano}; see Section~\ref{Algorithmics}. For an edge in $A^H \setminus \OH$, we  say that the edge has type $A$. Similarly, edges in $\OH \setminus A^H$ have type $O$. We use $T \in \sset{A,O}$ for the generic type. Then, $\bar{T}$ is the other type. For $i \in \Ahalf \cap \Opm$, the heavy parity is the same in both allocations; we call the node \emph{unbalanced} if the cardinalities of $A^H_i$ and $\OH_i$ are different. For $i \in \Ahalf \cap \Opm$, and only for such $i$, we pair the edges in $A^H_i \oplus \OH_i$, first edges of different types and then the remaining either $A$- or $O$-edges (there being an even number of them). Goods have degree zero or two in $\OH \oplus \AH$; recall that heavy goods must be allocated as heavy goods in this section. For goods of degree two, we pair the incident $A$- and $O$- edge. In this way, we form walks. The endpoints of these walks are agents in $\SA_0 \cup \SA_1 \cup \SO_0 \cup \SO_1$. The interior nodes of the walks lie in $\Ahalf \cap \Opm$. We distinguish two kinds of interior nodes: through-nodes and hinges. In through-nodes, the two incident edges have different types. In hinges, the incident edges have the same type. Hinges are unbalanced. Depending on the type, the hinge is an $O$- or $A$-hinge. The types of the hinges along a walk alternate. Endpoints are also either $A$- or $O$- endpoints, depending on the type of the incident edge. The paths between hinges are alternating. Imagine we follow the walk from one endpoint to the other. If the first endpoint has type $T$, the first hinge will have type $\bar{T}$, the second hinge will have type $T$, the last hinge and second endpoint will have type $\bar{T}$ and $T$ or $T$ and $\bar{T}$, depending on whether the number of hinges is odd or even. 

Since there are more agents of value $x + \sfrac{1}{2}$ in $O$ than in $A$, there will be a walk $W$ that has  an endpoint $i \in \Ohalf \cap (\SA_0 \cup \SA_1)$. Let $j$ be the other endpoint. Then, $j \in \SA_0 \cup \SA_1 \cup \SO_0 \cup \SO_1$, as nodes in $\Ahalf \cap \Opm$ are interior nodes of walks. We will show: If $j \in \SA_0 \cup \SA_1$, $A$ does not satisfy its defining properties, and if $j \in (\SO_0 \cup \SO_1) \cap \Ahalf$, $O$ is not closest to $A$. 
Hinge nodes lie in $\SA_{\sfrac{1}{2}} \cap \SO_{\pm \sfrac{1}{2}}$. Hinge nodes actually lie in $\SO_{\sfrac{1}{2}}$, and $T$-hinges own at least $2s$ light goods in the allocation $\bar{T}$ for $T \in \{A,O\}$, as we show next.

\begin{lemma}\label{lights in hinges} All hinge nodes of $W$ belong to $\SO_{\sfrac{1}{2}}$, $A$-hinges own at least $2s$ light goods in $O$, and $O$-hinges own at least $2s$ light goods in $A$. \end{lemma}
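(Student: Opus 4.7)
The plan is to first establish the location statement that every hinge $v$ lies in $O_{\sfrac{1}{2}}$, because once this is known the two light-good bounds fall out of Lemma~\ref{number of light 1} by a single arithmetic check. Condition~4 in the definition of an $A$-$O$-walk already gives $v \in A_{\sfrac{1}{2}} \cap O_{\pm\sfrac{1}{2}}$, so the only thing to rule out is $v \in A_{\sfrac{1}{2}} \cap O_{-\sfrac{1}{2}}$. I would proceed by contradiction, crucially using that $O$ is optimal and closest to $A$: I will exhibit a competitor $O'$ which either strictly increases $\NSW$ (contradicting optimality) or preserves $\NSW$ while shrinking $|A^H \oplus O^H|$ (contradicting closeness).

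Treat the $A$-hinge case first; the $O$-hinge case is symmetric. Inspection of the walk-construction in Lemma~\ref{walk exists} shows that an $A$-hinge is in fact $A$-heavy, so $|A_v^H| \ge |O_v^H| + 2$, and equating $s\cdot|A_v^H| + \ell_A = x + \sfrac{1}{2}$ with $s\cdot|O_v^H| + \ell_O = x - \sfrac{1}{2}$ gives $\ell_O \ge \ell_A + 2s - 1$, matching the exception bound of Lemma~\ref{number of light 1}. Pick a good $g$ realising an $A$-edge at $v$ in the walk and let $u$ be its owner in $O$; then $u$ is a neighbour of $g$ in $G^H$. I would now form $O'$ by moving $g$ from $u$ back to $v$ and compensating with light goods according to the class of $u$ in $O$. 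If $u \in O_1$, no light good is moved: the values $x-\sfrac{1}{2}$ and $x+1$ simply swap between $v$ and $u$, $\NSW$ is unchanged, and the two edges incident to $g$ leave the symmetric difference, contradicting closeness. If $u \in O_{\sfrac{1}{2}}$ or $u \in O_0$, one transfers $\lfloor s\rfloor$ light goods from $v$ (which owns at least $2s-1$ of them in $O$) to $u$; a direct computation shows $\NSW(O') > \NSW(O)$ when $u \in O_{\sfrac{1}{2}}$ and $\NSW(O')=\NSW(O)$ with strictly smaller symmetric difference when $u \in O_0$. If $u \in O_{-\sfrac{1}{2}}$, one extra light good is routed to $u$ from a bundle $w$ of value at least $x + \sfrac{1}{2}$ in $O$, whose existence follows from Lemma~\ref{parity and existence of $i$}(3) together with the bound $x_O \le x + \sfrac{1}{2}$ from Section~\ref{further properties}; the arithmetic again yields a strict $\NSW$ improvement.

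With the location claim in hand the remaining two assertions are direct. For an $A$-hinge $v$ we now have $v \in A_{\sfrac{1}{2}} \cap O_{\sfrac{1}{2}}$, so $A_v$ and $O_v$ share the value $x + \sfrac{1}{2}$ and, by Lemma~\ref{parity and existence of $i$}(1), the same heavy parity. Combining the two valuation identities with $|A_v^H| - |O_v^H| \ge 2$ yields $\ell_O - \ell_A \ge 2s$, so $O_v$ carries at least $2s$ light goods. The $O$-hinge statement is the same calculation with $A$ and $O$ swapped.

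The main obstacle is the $u \in O_{-\sfrac{1}{2}}$ subcase of the location argument: one has to source the extra donor light good without dropping the minimum value of $O'$ below $x_O$, which relies on the structural bounds of Section~\ref{further properties} forcing the existence of a not-entirely-heavy bundle of value at least $x + \sfrac{1}{2}$ in $O$. Also mildly delicate is verifying that an $A$-hinge produced by the walk really is $A$-heavy (and dually for $O$-hinges), which uses the fact that when the walk decides to continue from $v$ with a second $T$-edge, no $\bar T$-edge incident to $v$ is available. Everywhere else the proof is routine bookkeeping of the two monotone quantities $\NSW$ and $|A^H \oplus O^H|$ under a local three-agent modification of $O$.
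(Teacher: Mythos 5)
Your approach genuinely differs from the paper's: instead of augmenting the alternating path between consecutive hinges (or between the first hinge and the walk endpoint $i$) to $O$, you try to rule out a hinge $v\in O_{-\sfrac{1}{2}}$ by a purely local move, reassigning a single heavy good $g$ (an $A$-edge at $v$) from its $O$-owner $u$ to $v$ and rebalancing with light goods. That route has real gaps.

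First, the subcase $u\in O_{-\sfrac{1}{2}}$ does not close. After the swap and the transfer of $\lfloor s\rfloor$ light goods from $v$ to $u$, you are left with $v$ at $x$ and $u$ at $x-1$, and you need to top $u$ up with one more light good from an outside bundle $w$ of value $\ge x+\sfrac{1}{2}$. Lemma~\ref{parity and existence of $i$}(3) only guarantees a bundle of value $x+\sfrac{1}{2}$ in $O$; it does not guarantee that any such bundle (or any bundle in $O_{\sfrac{1}{2}}\cup O_1$ other than $v,u$) contains a light good. Since $x_O=x-\sfrac{1}{2}$ here, only bundles of value strictly above $x_O+1=x+\sfrac{1}{2}$ are forced to be heavy-only by Lemma~\ref{lem: simple rules}a); a bundle of value exactly $x+\sfrac{1}{2}$ may be heavy-only, so your donor need not exist. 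The paper sidesteps this entirely: by augmenting the whole path between two consecutive hinges it moves light goods only between the two hinge endpoints, and those are exactly the nodes Lemma~\ref{number of light 1} guarantees own $\ge 2s-1$ light goods, so no external donor is needed.

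Second, the $u\in O_1$ subcase is computed for $s=\sfrac{3}{2}$ only: for general $s=\sfrac{p}{2}$, giving $v$ a heavy good raises it to $x-\sfrac{1}{2}+s>x+1$, so "swap values, move no light good" is false and the spread in fact widens; you would need to move $\lfloor s\rfloor$ light goods and argue a strict $\NSW$ increase. This is fixable, but as written it is wrong. Third, the claim that "the $O$-hinge case is symmetric" is not: for an $O$-hinge $v$ your symmetric move would take a good away from $v$ in $O$ and require the $A$-owner $u'$ of that good to donate $\approx s$ light goods to $v$ in $O$, but you have no bound on $u'$'s light goods in $O$ at all (Lemma~\ref{number of light 1} bounds $v$'s light goods in $A$, not $O$, for an $O$-hinge). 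The paper's consecutive-hinge/endpoint argument is symmetric in $A$ and $O$ by construction; the single-edge swap is not.

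The reduction to Lemma~\ref{number of light 1} once all hinges are in $A_{\sfrac{1}{2}}\cap O_{\sfrac{1}{2}}$ is correct and is exactly what the paper does, and your observation (via the walk construction) that a $T$-hinge is $T$-heavy is also used correctly. But the core location argument needs to follow the path-augmentation structure rather than a local edge swap, precisely to avoid needing a third donor bundle and to get genuine $A$/$O$ symmetry.
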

\begin{proof} By  definition, hinge nodes are unbalanced and lie in
  $\SA_{\sfrac{1}{2}} \cap \SO_{\pm \sfrac{1}{2}}$. We will first show that all hinge nodes have the same value in $O$. If there is only one hinge node, this is obvious. Otherwise, consider two consecutive hinges and the alternating path $P$ connecting them. Assume that their values in $O$ differ by one, i.e, one has value $x + \sfrac{1}{2}$ and the other value $x - \sfrac{1}{2}$. We change $O^H$ to $O^H \oplus P$ and use Lemma~\ref{global accounting} to show that the modified heavy allocation can be extended to a full allocation. 
We define the target value of both endpoints as $x$ and the target value of all other agents as their value in $O$. Then, the sum of the values is unchanged. The $A$-endpoint, say $h$, of $P$ receives an additional heavy good; this is fine as $\hdeg_O(h) < r_{\sfrac{1}{2}}$, and, hence, $\hdeg_O(h) \le r_{\sfrac{1}{2}} - 2$, and, therefore, $\hdeg_O(h) + 1 \le r_{\sfrac{1}{2}} - 1 \le r_0$, according to Lemma~\ref{max number of heavy}. The $O$-endpoint loses a heavy good; this is fine since an $O$-endpoint must own at least one heavy good. The heavy degree of all other nodes stays unchanged. The modified allocation has larger $\NSW$ than $O$, a contradiction. We have now shown that consecutive hinge nodes, and hence all hinge nodes, have the same value in $O$.

It remains to show that the first hinge of the walk lies in $\SO_{\sfrac{1}{2}}$; call it $h$. Assume $h \in \SO_{-\sfrac{1}{2}}$. Let $i$ be the beginning of the walk. We distinguish cases according to whether $i$ is $A$-heavy or not. We define the target value of $i$ and $h$ as $x$ and the target value of all other nodes as their value in $O$. Then, again, the sum of the values does not change. If an allocation with the desired target values exists, it has larger $\NSW$ than $O$, a contradiction. 

If $i$ is $O$-heavy, $h$ is $A$-heavy. We augment $P$ to $O$. Then, $i$ loses a heavy good, which is fine as $\hdeg_O(i) > 0$, and $h$ gains a heavy good; this is fine as $h$ is an $A$-hinge, and, therefore, $\hdeg_O(i) \le \hdeg_A(i) - 2 \le r_{\sfrac{1}{2}} - 2$, and, hence,  $\hdeg_O(i) + 1 \le r_{\sfrac{1}{2}} - 1 \le r_0$. 

If $i$ is $A$-heavy, $h$ is $O$-heavy. We augment $P$ to $O$. Then; $h$ loses a heavy good; this is fine as $\hdeg_O(h) > 0$. The agent $i$ gains a heavy good; this is fine as $i$ is an $A$-endpoint, and, therefore $\hdeg_O(i) \le \hdeg_A(i) - 2 \le r_{\sfrac{1}{2}} - 2$, and, hence,  $\hdeg_O(i) + 1 \le r_{\sfrac{1}{2}} - 1 \le r_0$. 

We have now established that all hinges are unbalanced and belong to $\SA_{\sfrac{1}{2}} \cap \SO_{\sfrac{1}{2}}$. For a $T$-hinge $h$, $\hdeg_{\bar{T}}(v) \le \hdeg_T(v) - 2$. Thus, $h$ contains at least $2s$ light goods in $\bar{T}$.\end{proof}

  \begin{lemma} Any walk $W$ in the decomposition of $A^H \oplus \OH$ is a simple path, i.e., no agent occurs twice on $W$. \end{lemma}
  \begin{proof} Assume otherwise, say $W$ visits a vertex $i$ twice. Consider the path $W'$ between the two occurences of $i$. The first and the last edge of $W'$ have either different types or the same type. If they have the same type, $W'$ contains an odd number of hinges and, hence, together with $i$ forms a cycle with an even number of hinges. If the first edge and the last edge have different types, $W'$ contains an even number of hinges, maybe zero. If there is no hinge, $W'$ is simply an alternating cycle. If there is a hinge, we combine the alternating path from $i$ to the first hinge on $W'$ and the alternating path from the last hinge of $W'$ to $i$ into a single alternating path. So, in either case, $W'$ is a cycle with an even number of hinges that are connected by alternating paths. Half of the hinges are $A$-hinges and half are $O$-hinges. Let us pair the hinges so that each pair contains an $O$-hinge and an $A$-hinge. We augment $W'$ to $O$. Through-nodes gain and lose a heavy good, and, hence, the value of their bundle does not change. $O$-hinges lose two heavy edges and $A$-hinges gain two heavy edges.
Consider a pair $(h,\ell)$ of $A$- and $O$-hinges. An $A$-hinge owns at least $2s$ light goods in $O$. We give these goods to $\ell$. Then, the utility profile of $O$ does not change, and $O$ moves closer to $A$, a contradiction. \end{proof}

At this point, we have established the existence of a walk $W$ with an endpoint $i \in (\SA_0 \cup \SA_1) \cap \SO_{\sfrac{1}{2}}$. We may assume that there is a bundle of value $x + 1$ in $A$ containing a light good. Otherwise, $A$ is optimal (Lemma~\ref{containsalight}) and we are done. There is a bundle of value $x$ in $A$. Let $j$ be the other endpoint of the walk. Since $W$ is simple, $j \not= i$.
Assume $i$ is $T$-heavy, i.e., $\hdeg_T(i) > \hdeg_{\bar{T}}(i)$. The value of $A_i$ is $x$ or $x + 1$, and the value of $O_i$ is $x + \sfrac{1}{2}$. The types of the hinges alternate along $W$, the type of the first hinge is opposite to the type of $i$, and the type of the last hinge is opposite to the type of $j$. Each $A$-hinge holds $2s$ light goods in $O$, and each $O$-hinge holds $2s$ light goods in $A$ (Lemma~\ref{lights in hinges}). If the types of $i$ and $j$ differ, the number of hinges is even, and if the types are the same, the number of hinges is odd. 

We distinguish two cases: $j \in \SA_0 \cup \SA_1$ and $j \in (\SO_0 \cup \SO_1) \cap \SA_{\sfrac{1}{2}}$. In the former case, we show how to improve $A$, and, in the latter case, we will derive a contradiction to the assumption that $O$ is closest to $A$.

\paragraph{Case $j \in \SA_0 \cup \SA_1$:} We augment the walk to $A$. The heavy parity of $i$ and $j$ changes, and the heavy parity of all other nodes, including the hinges, does not change. Each $O$-hinge releases $2s$ light goods, and each $A$-hinge requires $2s$ light goods. An $A$-endpoint loses a heavy good and an $O$-endpoint gains a heavy good.

We define the target values of $i$ and $j$ as $\tw_i = \tw_j = x + \sfrac{1}{2}$. If $w^A_i = w^A_j = x$, let $k$ be an agent that owns a bundle of value $x + 1$ containing a light good and define $\tw_k = x$. If $w^A_i = w^A_j = x + 1$, let $k$ be an agent that owns a bundle of value $x$ and define $\tw_k = x + 1$. If $\sset{w^A_i,w^A_j} = \sset{x, x + 1}$, leave $k$ undefined. For all other agents, their target value is their value in $A$. Then, the sum of the target values is the same as the total value of the bundles in $A$.

We next define the number of heavy goods allocated to each agent in the new allocation. For an $A$-hinge $h$, the number decreases by two; since $\hdeg_A(h) \ge \hdeg_O(h) + 2 \ge 2$, this is fine. For an $O$-hinge $h$, the number increases by two; since $\hdeg_A(h) + 2 \le \hdeg_O(h) \le r_{\sfrac{1}{2}}$, this is fine. For an $O$-endpoint, the number increases by one; if $i$ is an $O$-endpoint, this is fine since $\hdeg_A(i) < \hdeg_O(i) \le r_{\sfrac{1}{2}}$; if $j$ is an $O$-endpoint and $j \in \Ohalf$, the same reasoning applies; if $j$ is an $O$-endpoint and $j \in \SO_0 \cup \SO_1$, $\hdeg_A(j) < \hdeg_O(j)$ implies $\hdeg_A(j) + 2 \le \hdeg_O(j) \le r_1$, and, hence, $\hdeg_A(j) + 1 \le r_{\sfrac{1}{2}}$. For an $A$-endpoint $h$, the number decreases by one; this is fine, since $\hdeg_A(h) > \hdeg_O(h) \ge 0$. Finally, the total number of allocated heavy goods stays the same. This holds, since the number of hinges is even if the endpoints have differnt types, and since there is an additional $T$-hinge if both endpoints have type $\bar{T}$. 

The total number of allocated heavy goods stays the same, the total value of the bundles stays the same, and each bundle contains at most the maximum feasible number of heavy goods. Therefore, the number of light goods is precisely the number of light goods needed to fill all bundles to their target value (Lemma~\ref{global accounting}). The transformation improves the $\NSW$ of $A$, and the values of all bundles in the new allocation lie in $\sset{x, x+ \sfrac{1}{2}, x + 1}$, a contradiction to the optimality of $A$ under these constraints.

\paragraph{Case $j \in (\SO_0 \cup \SO_1) \cap \SA_{\sfrac{1}{2}}$:} We augment the walk to $O$. For the intermediate nodes, the heavy parity does not change. For $i$ and $j$, the heavy parity changes. We define the target values as $\tw_i = w^O_j$ and $\tw_j = w^O_i$. For all other agents, the target value is their value in $O$. So the sum of the target values is the total value of the bundles in $O$, and the utility profile does not change.

If $i$ and $j$ have different types, the number of hinges is even. If $i$ and $j$ have type $T$, there is an extra hinge of type $\bar{T}$. If $T = A$, the endpoints gain one heavy good each, and the hinge loses two heavy goods. If $T = O$, the endpoints lose one heavy good each, and the hinge gains two heavy goods. So the sum of the heavy goods in all bundles does not change.

It remains to argue that no bundle contains too many heavy goods and that agents losing a heavy good actually owned one. An $O$-endpoint $h$ loses a heavy good; this is fine, since $\hdeg_O(h) > 0$ for an $O$-endpoint. An $A$-endpoint gains a heavy good. If $j$ is an $A$-endpoint, $\hdeg_O(j) < \hdeg_A(j) \le r_{\sfrac{1}{2}}$, and, hence, $\hdeg_O(j) + 1 \le r_{\sfrac{1}{2}}$; this is fine, since $\tw_j = w^O_i = x + \sfrac{1}{2}$. If $i$ is an $A$-endpoint and $A_i$ has value $x$, $\hdeg_O(i) < \hdeg_A(i) \le r_0$, and, hence, $\hdeg_O(i) + 1 \le r_0$. This is fine, since $\tw_i = w_j^O \in \sset{x, x + 1}$. If $i$ is an $A$-endpoint and has value $x + 1$, $A_i$ cannot be heavy-only since then $O_i$ would also have value $x + 1$ and be heavy-only according to Lemma~\ref{heavy-only $x + 1$}. Hence, $\hdeg_O(i) \le \hdeg_A(i) - 2 \le r_1 - 2$. So, $\hdeg_O (i) + 1 \le r_1 -1 \le r_O$, and we are fine. 

Thus, the utility profile of $O$ does not change, and $O$ moves closer to $A$, a contradiction.

We have now established: 

\begin{theorem}\label{main theorem: heavy as heavy} $A$ is optimal.\end{theorem}

\subsubsection{Algorithm for Optimizing Small Bundles}\label{Algorithmics}

We now show how to efficiently compute an optimal allocation (under the assumption that any heavy good is allocated to an agent who considers it heavy), using the characterization found in Section~\ref{APC exists}. Let $x$ be the minimum value of a bundle after the greedy addition of the light goods. Then, only
bundles of value $x$, $x + \sfrac{1}{2}$, and $x + 1$ can contain small goods. Let $N_s$ be the
owners of the bundles of value at most $x + 1$, and let $\Gamma(N_s)$ be the goods assigned to $N_s$
in $A$. We need to construct the allocation $C$ of the goods in $\Gamma(N_s)$ to $N_s$ which
maximizes the number of bundles of value $x + \sfrac{1}{2}$.  All alternating paths in this section
are with respect to $A$ and $\bar{A}$.

\begin{lemma}\label{easy case} Let $A$ be an allocation with all values in
  $\{x, x + \sfrac{1}{2}, x + 1\}$. If there is either no bundle of value $x$ or no bundle of value
  $x + 1$, $A$ is optimal. \end{lemma}
\begin{proof} In either case, the number of bundles of each type is determined by the total value of
  the bundles and the number of bundles. \end{proof}

\begin{lemma}\label{create $x + 1$ containing a light} Assume all bundles in $\SA_1$ are
  heavy-only. If $A$ is sub-optimal, there is an alternating path starting with an $A$-edge from an
  agent in $\SA_1$ and ending in either $\SA_0$ or in an agent in $\SA_{\sfrac{1}{2}}$ that owns
  more than $\floor{s}$ light goods. \end{lemma}
\begin{proof} Since $x + 1$ is a multiple of $s$, bundles of value $x$ contain at least $2s - 1$
  light goods, and bundles of value $x + \sfrac{1}{2}$ contain at least $\floor{s}$ light
  goods. Consider alternating paths starting with an $A$-edge at agents in $\SA_1$ and assume that
  all such paths either lead to an agent in $\SA_1$ or to an agent in $\SA_{\sfrac{1}{2}}$ that owns
  exactly $\floor{s}$ light goods. Let $U$ be this set of agents. All heavy goods owned by agents in
  $U$ are light for agents not in $U$, as otherwise, more agents would belong to $U$. Also exactly
  $\floor{s}$ light goods are allocated to every agent in $U \cap \SA_{\sfrac{1}{2}}$. The bundles
  owned by the agents in $\bar{U}$ have value $x$ and $x + \sfrac{1}{2}$, and the maximal number of
  heavy and light goods are assigned to the agents in $\bar{U}$. So, the allocation is
  optimal. \end{proof}

For $d \in \sset{0,\sfrac{1}{2},1}$, let $N_d = \set{\ell}{\text{$\ell \le r_d$ and $\ell$ and $r_d$
have the same parity}}$ be the set of permissible numbers of heavy goods in bundles of value
$x + d$.

\begin{lemma}\label{parity matching} Let $A$ be an allocation with all values in
  $\{x, x + \sfrac{1}{2}, x + 1\}$. $A$ is sub-optimal if and only if there is an allocation $C^H$
  of the heavy goods in $A$ and a pair of agents $i$ and $j$ in $\SA_0 \cup \SA_1$ such that in $C$
  \begin{itemize}
  \item all agents in $\SA_{\sfrac{1}{2}} \cup \{i,j\}$ own a number of heavy goods in $N_{1/2}$, and
    for each of the agents in $\SA_0 \cup \SA_1 \setminus \{i,j\}$, the number of owned heavy goods
    is in the same $N$-set as in $A$; and,
  \item if $i$ and $j$ own bundles of value $x$ in $A$, there must be a bundle of value $x + 1$ in
    $A$ containing a light good, and if $i$ and $j$ own bundles of value $x + 1$ in $A$, there must
    be a bundle of value $x$ in $A$.
  \end{itemize}
\end{lemma}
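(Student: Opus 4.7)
I would exhibit, starting from the hypothesized $B^H$, an allocation $B$ of all the goods whose NSW strictly exceeds $\NSW(A)$. Because $B^H$ gives every heavy good to an agent who values it as heavy, the sum of bundle values in $B$ is $|H|s + |L| = \sum_i v_i(A_i)$ no matter how I redistribute the light goods; hence I only need to shuffle the lights in a way that moves mass toward the geometric mean.

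First I pad every agent whose $B^H$-count lies in $N_{\sfrac{1}{2}}$ (an odd count, yielding half-integer heavy value at most $x+\sfrac{1}{2}$) with enough lights to reach bundle value exactly $x+\sfrac{1}{2}$. By hypothesis this set of agents contains $A_{\sfrac{1}{2}} \cup \{i,j\}$, so at least $|A_{\sfrac{1}{2}}|+2$ agents reach value $x+\sfrac{1}{2}$ in $B$. The remaining agents are those in $A_0 \cup A_1 \setminus \{i,j\}$; their $B^H$-counts lie in $N_0$ or $N_1$ respectively, so their heavy values are integer and at most $x$ or $x+1$, and each can be padded to either $x$ or $x+1$. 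Conservation of the total sum forces exactly $|A_0|-1$ of them at value $x$ and $|A_1|-1$ at value $x+1$. Comparing with $A$, the allocation $B$ has two more bundles of value $x+\sfrac{1}{2}$ and one fewer bundle of each of values $x$ and $x+1$, with all other bundles unchanged, so
\[
  \left(\frac{\NSW(B)}{\NSW(A)}\right)^n \;=\; \frac{(x+\sfrac{1}{2})^2}{x(x+1)} \;>\; 1,
\]
and $A$ is not optimal.

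The obstacle is verifying feasibility of the ``$x$ or $x+1$'' choice for the integer-count agents. An agent with $B^H$-count in $N_1 \setminus N_0$ has heavy value already equal to $x+1$ and cannot be placed at value $x$; by Lemma~\ref{max number of heavy} this occurs only when $x+1 \in s\mathbb{N}$, in which case $N_1 \setminus N_0 = \{g+1\}$. If at least one of $i,j$ lies in $A_1$, then $|A_1 \setminus \{i,j\}| \le |A_1| - 1$ and any inflexibility is absorbed automatically. The delicate case is $|\{i,j\} \cap A_0| = 2$; here the extra hypothesis provides a bundle $h \in A_1$ of value $x+1$ containing a light good, so $h$'s $A$-count lies in $N_0 \cap N_1$. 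I expect the proof to use $h$ to perform a local exchange, augmenting an alternating path in the bipartite heavy-goods graph, to move $B^H$ to an equivalent allocation satisfying the same hypothesis but in which some agent of $A_1 \setminus \{i,j\}$ has $B^H$-count in $N_0 \cap N_1$, providing the required flexible slot. This swap, analogous to the alternating-path manipulations in Section~\ref{basic rules}, is where I anticipate the bulk of the technical care to lie.
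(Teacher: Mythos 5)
Your proof follows essentially the same route as the paper's: build an allocation $B$ from $B^H$ by padding light goods, use conservation of total value to force the target value profile ($|A_{\sfrac{1}{2}}|+2$ bundles at $x+\sfrac{1}{2}$, $|A_0|-1$ at $x$, $|A_1|-1$ at $x+1$), and compute the NSW ratio $(x+\sfrac{1}{2})^2/(x(x+1)) > 1$. The paper's proof writes this out more directly by first defining the target values $v^B_\ell$, checking that $\sum_\ell v^B_\ell = \sum_\ell v^A_\ell$ and $\sum_\ell h'_\ell = \sum_\ell h_\ell$, and then observing that $\sum_\ell(v^B_\ell - h'_\ell s)$ equals the number of light goods, so the padding is feasible termwise.

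The feasibility concern you single out in the $i,j \in A_0$ case is real, and you have correctly located the only place where $N_0 \subsetneq N_1$ can bite (namely $x+1 \in s\mathbb{N}$, where $N_1 \setminus N_0 = \{g+1\}$). However, the resolution is not an alternating-path surgery on $B^H$. The paper simply asserts, for the facilitator bundle $k$, that $h'_k \in N_0$, and the companion parity-matching problem in Section~\ref{Algorithm} constrains the degree of $k$ to lie in $N_0$ explicitly, so every $B^H$ the algorithm produces already satisfies this. In other words, the intended reading of the lemma's hypothesis is that the chosen facilitator $k$ receives a heavy count in $N_0$ under $B^H$, not merely in $N_1$; with that, your construction goes through immediately, and the ``bulk of technical care'' you anticipated evaporates. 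The facilitator's \emph{$A$}-count lying in $N_0 \cap N_1$ (because it holds a light good, forcing $h_k \le g-1$ when $x+1 \in s\mathbb{N}$) is used for the light-good accounting when $v^B_k = x$, not for any exchange on the heavy side. So: correct plan, correct identification of the sole delicate case, but the fix is a strengthened reading of the hypothesis rather than the alternating-path manipulation you propose.
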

\begin{proof} If $A$ is sub-optimal, there is an improving walk $W$.  Let $i$ and $j$ be the
  endpoints of the walk. Augmenting the walk and moving the light goods around, as described in
  Section~\ref{APC exists},
  \begin{itemize}
  \item adds $i$ and $j$ to $\SA_{1/2}$,
  \item reduces the value of a bundle of value $x + 1$ containing a light good to $x$ if $A_i$ and
    $A_j$ have value $x$ and increases the value of a bundle of value $x$ to $x + 1$ if $A_i$ and
    $A_j$ have value $x + 1$, and
  \item leaves the value of all other bundles unchanged.
  \end{itemize}
  Thus, in the new allocation, the number of heavy goods owned by $i$ and $j$ lies in
  $N_{\sfrac{1}{2}}$. For all other agents, the number of owned heavy goods stays in the same
  $N$-set. This proves the only-if direction.

  We turn to the if-direction. Assume that there is an allocation $C^H$ of the heavy goods in which
  for two additional agents $i$ and $j$ the number of owned heavy goods lies in $N_{1/2}$, and for
  all other agents, the number of owned heavy goods stays in the same $N$-set. We will show how to
  allocate the light goods such that $C^H$ becomes an allocation $C$ in which all bundles have
  value in $\{x, x + \sfrac{1}{2}, x + 1\}$ and
  $C_{\sfrac{1}{2}} = \SA_{\sfrac{1}{2}} \cup \{i,j\}$. Then, the $\NSW$ of $C$ is higher than the
  one of $A$.

  We next define the values of the bundles in $C$ and, in this way, fix the number of light goods that
  are required for each bundle. For $i$ and $j$, we define $\tw^B_i = \tw^B_j = x +
  \sfrac{1}{2}$. If $A_i$ and $A_j$ both have value $x$, let $k$ be an agent that owns a bundle of
  value $x +1$ containing a light good, and define $w_k^C = x$. If $A_i$ and $A_j$ have both value
  $x + 1$, let $k$ be an agent that owns a bundle of value $x$, and define $w_k^C = x + 1$.  Then,
  $w_i^A + w_j^A + w_k^A = w_i^C + w_j^C + w_k^C$ in both cases. If one of $A_i$ and $A_j$ has value
  $x$ and the other one has value $x + 1$, let $k$ be undefined. Then,
  $w_i^A + w_j^A = w_i^C + w_j^C$. For all $\ell$ different from $i$, $j$, and $k$, let
  $w_\ell^A = w_\ell^C$. Then, the total value of the bundles in $A$ and $C$ is the same. The total
  heavy value is also the same. Also, for each agent, the heavy value is at most the target
  value. So, the total difference between the target values and the heavy values is exactly the same
  in $C$ as in $A$. Hence, $C$ exists. \end{proof}

In a \emph{parity matching problem} in a bipartite graph, a maximum degree is specified for every
vertex. The question is whether there is a subset of the edges such that the degree of every vertex
is at most its maximum and has the same parity as its maximum. Parity matching problems can be
solved in polynomial time~\cite{CornuejolsFactors,Akiyama-Kano}.

\begin{lemma}\label{running time} Under the restriction that heavy goods must be
  allocated as heavy goods, an optimal allocation for $N_s$ can be computed in polynomial time, and
  the $\NSW$-allocation for half-integer instances can be computed in polynomial time.
\end{lemma}
\begin{proof} Let $A$ be any allocation for $N_s$. All bundles have values in
  $\sset{x, x + \sfrac{1}{2}, x + 1}$. If there are no bundles of value $x$ or no bundles of value
  $x + 1$, $A$ is optimal (Lemma~\ref{easy case}).

  If all bundles of value $x + 1$ are heavy-only, we search for an alternating path starting with an
  $A$-edge from an agent in $\SA_1$ and ending either in an agent in $\SA_0$ or in an agent in
  $\Ahalf$ that owns more than $\floor{s}$ light goods. If such a path exists, we either improve $A$ or
  create a bundle of value $x + 1$ containing a light good. We repeat until we create a bundle of
  value $x +1$ containing a light good. If we do not succeed, $A$ is optimal (Lemma~\ref{create
  $x + 1$ containing a light}).

  So, assume now that we have a bundle of value $x$ and a bundle of value $x + 1$ containing a
  light. We then check whether $A$ can be improved according to Lemma~\ref{parity matching}.  In
  order to check for the existence of the allocation $C^H$, we set up the following parity matching
  problem for every pair $i$ and $j$ of agents.
  \begin{itemize}
  \item For goods, the degree in the matching must be equal to 1.
  \item For all agents in $\SA_{\sfrac{1}{2}} \cup \{i,j\}$, the degree must be in
    $N_{\sfrac{1}{2}}$.
  \item If $A_i$ and $A_j$ have value $x$, let $A_k$ be any bundle of value $x + 1$ containing a
    light good. If $A_i$ and $A_j$ have value $x + 1$, let $A_k$ be a bundle of value $x$. The
    degree of $k$ must be in $N_0$.\footnote{It would be incorrect to require that the degree of $k$
    must be in $N_1$ since we want to allocate a light good to $k$.}
  \item For an $a \in \SA_0\setminus \{i,j,k\}$, the degree must be in $N_0$, and for an
    $a \in \SA_1 \setminus \{i,j,k\}$, the degree must be in $N_1$.
  \end{itemize}
  If $C^H$ exists for some pair $i$ and $j$, we improve the allocation. If $C^H$ does not exist for
  any pair $i$ and $j$, $A$ is optimal.

  Each improvement increases the size of $\SA_{\sfrac{1}{2}}$ by two, and, hence, there can be at
  most $n/2$ improvements. In order to check for an improvement, we need to solve $n^2$ parity
  matching problems. Parity matching problems can be solved in polynomial time.
\end{proof}

% We conjecture that this time bound can be improved.

\subsection{General Case: Heavy Goods can be Allocated as Light}\label{heavy as light}

Finally, we show our main theorem.

\HalfIntegerResult*

Let $O$ be an allocation maximizing NSW, and, for $t \ge 0$, let $A^t$ be a best allocation in which $t$ heavy goods are converted to light, i.e., the valuation functions are changed such that these $t$ goods are light for all agents. We will compute $\max_t \NSW(A^t)$.

\begin{lemma} $\NSW(O) = \max_t \NSW(A^t)$. \end{lemma} 
\begin{proof} Clearly, $\NSW(O) \ge \NSW(A^t)$ for all $t$. Consider the set of heavy goods that are allocated as light goods in $O$, let $t$ be their number, and change the valuation functions such that all agents consider these goods light. Then $\NSW(O) \le \NSW(A^t)$. 
\end{proof}

We follow the approach taken in the integral case in Section \ref{sec:integer}. We determine allocations $A^t$ for $t = 0, 1, 2, \ldots$\; ; we determined $A^0$ in the previous section (using the notation $A$ there). % Figure~\ref{structure of $A$} shows the structure of $A^0$. 
In $A^t$, $t$ goods are converted. Which $t$ goods? We will next derive properties of the optimal set of converted goods. 
%Let $x$ be the minimum value of any bundle and let $k_0$ be minimal such that $k_0 s > x + 1$. The \emph{core} of $A^t$ consists of all agents $a$ owning at most $k_0 - 1$ heavy goods and to which one cannot push an additional heavy item from an agent owning at least $k_0$ heavy items, i.e., there is no $A$-$\bar{A}$ alternating path from an agent owning at least $k_0$ heavy items to $a$. 

For a set $G$ of goods, let $C(G)$ be the allocation $A^0$ interpreted with the following modified valuation function: The goods in $G$ are light for all agents, and for any agent $i$, $i$ owns the heavy goods $A_i^H \setminus G$ and $\abs{A_i^H \cap G}$ light goods in addition to the light goods already owned by it.\footnote{We write $A_i$ instead of $A^0_i$ for the bundles of $A^0$.} Let $B(G)$ be an optimal allocation for the modified valuation function closest to $C(G)$, i.e., with minimal $\abs{C^H(G) \oplus B^H(G)}$. We choose $G$ such that
\begin{itemize}
\item $\NSW(B(G))$ is maximum, and
\item among the sets $G$ that maximize $\NSW(B(G))$, $\abs{G}$ is minimum, and
\item among the minimum cardinality sets $G$ that maximize $\NSW(B(G))$, $\abs{C^H(G) \oplus B^H(G)}$ is smallest.
\end{itemize}
For simplicity, let us write $B$ and $C$ instead of $B(G)$ and $C(G)$ for this choice of $G$. We also write $x_j$ for the value of $j$'s bundle in $C$ and $x'_j$ for the value of $j$'s bundle in $B$. We use $S$ to denote the set of agents $i$ with $A_i^H \cap G \not= \emptyset$. In Lemmas~\ref{conversion lemma}, \ref{further properties of G}, and~\ref{heaviest contributes to G}, we derive properties of $G$. 

\begin{lemma}\label{conversion lemma} Let $G$ and $S$ be as defined above, and let $x$ be the minimum value of any bundle in $A^0$. Then, $x'_\ell \le s + \min_{i \in S} x'_i \le \min_{i \in S} x_i$ for all agents $\ell$. For $i \in S$: $B_i$ contains no light good, $B_i^H \subseteq A_i^H \setminus G$, $x - 1  \le x'_i \le x_i - \abs{A_i^H \cap G} \le x_i - s$, and $x_i > x + 1$. 
%Finally, let $x_B$ be the minimum value of any bundle in $B$. Then $x_B \le \min_{i \in S} x'_i/s$. 
\end{lemma}
\begin{proof} If $G$ is empty, the Lemma obviously holds. So assume $G \not= \emptyset$, and, hence, $S \not= \emptyset$. Let $j = \argmin_{i \in S} x'_i$, and assume there is an $\ell$ such that $x_\ell' > x'_j + s$. Then, $\ell$'s bundle in $B$ is heavy-only, and, hence, $\ell$ owns at least two heavy items more than $j$ in $B$. We return one of the converted items to $j$, reconvert it to a heavy item, and replace it with a converted one from $\ell$. This changes $x'_j$ to $x'_j + s$ and $x'_\ell$ to $x'_\ell - s$ and, hence, improves NSW.\footnote{Note that $(x'_j + s)(x'_\ell - s) - x'_j x'_\ell = s(x'_\ell - x'_j - s) >0$.}

For the other claims, consider any $i \in S$, and let $g \in A_i^H \cap G$. Let $j$ be the agent to which $g$ is allocated in $B$; $j = i$ is possible.

Assume first that $B_i$ contains a light good, say $g'$.  We interchange $g$ and $g'$, i.e., we allocate $g'$ to $j$ and $g$ to $i$. This does not change the $\NSW$. We now reconvert $g$ back to a heavy good and improve the $\NSW$, a contradiction to the optimality of $B$. So, $B_i$ contains no light good.

  Assume next that $B_i^H \setminus C_i^H$ is non-empty. With respect to $B^H \oplus C^H$, goods have degree zero or two. Also, $B^H \oplus C^H$ contains no cycles, as augmenting a cycle to $B^H$ would decrease the distance between $C^H$ and $B^H$. Thus, $B^H \oplus C^H$ is a collection of $B$-$C$-alternating paths. One of these paths, call it $P$, starts in $i$ with a $B$-edge and ends with a $C$-edge $(g',h)$ at some agent $h$. We augment $P$ to $B$, re-allocate $g$ to $i$ as a heavy good, and convert $g'$ to a light good and give it to $j$; recall that $j$ is the owner of $g$ in $B$. The values of all bundles stay unchanged. We obtain an allocation $D$ with the same NSW as $B$, $G' = (G \setminus g) \cup g'$ as the set of converted goods, and $\abs{D^H \oplus C^H(G')} < \abs{B^H\oplus C^H}$, a contradiction to the choice of $G$. So, $B_i^H \subseteq C_i^H = A_i^H \setminus G$, and, hence, $x'_i \le x_i - \abs{A_i^H \cap G} \le x_i - s$, since $B_i$ contains no light good.

  From the preceding and the first paragraph of the proof, we obtain $x'_\ell \le s + \min_{i \in S} x'_i \le s + \min_{i \in S} x_i - s = \min_{i \in S} x_i$ for any agent $\ell$. 

  We next prove $x'_i \ge x - 1$ for all $i \in S$. If $S$ comprises all agents, $x'_i \le x_i -s$ for all agents, and, hence, $B$ is not optimal. So, assume that $S$ does not comprise all agents, and there is an agent $i^* \in S$ with $x'_{i^*} \le x - \sfrac{3}{2}$. Consider the following allocation $D$. Starting with $C(G)$, we move for all $i \in S$ the light goods in $C_i$ to agents outside $S$ and the heavy goods in $C_i^H \setminus B_i^H$ to their owners in $B$. At this point, the allocation agrees with $B$ for all $i \in S$. We next turn to the agents in $\bar{S}$. They own the items $\cup_{i \in \bar{S}} A_i^0$ plus maybe some heavy items previously owned by agents in $S$ plus at least one light item that was previously owned as a heavy item by an agent in $S$. Consider any optimal allocation of these items to the agents in $\bar{S}$, e.g., the one in $B$. We claim that every agent owns at least a bundle of value $x$ in this allocation. This can be seen as follows. We start with the allocation $A^0$ restricted to the agents in $\bar{S}$. Every bundle has value at least $x$. Then we add the additional heavy items and then we add the light items greedily. An alternative view of the greedy addition of the light items is as follows: As long as there is a bundle containing a light item and having a value that is more than one larger than another bundle, move the light item to the other bundle. When we start this process with bundles all having value at least $x$, we will never create a bundle of value less than $x$. Optimizing the allocation will also not create a bundle of value less than $x$. Note that the optimal allocation is characterized in the four bullets on page~\pageref{characterization of the optimal allocation}. One of the bundles owned by an agent in $\bar{S}$ contains a light item. We move it to $i^*$ and improve the NSW.
  
  We finally show $x_i > x + 1$ for all $i \in S$. For $s \ge \sfrac{5}{2}$, we have $x_i \ge x_i' + s \ge x - 1 + s > x+1$. So, assume $s = \sfrac{3}{2}$: If $x_i = x$, we have $x'_i \le x - \sfrac{3}{2}$, and if $i$'s bundle in $A^0$ contains a light item, we have $x'_i \le x + 1 - 1 - s \le x - \sfrac{3}{2}$, a contradiction to $x'_i \ge x - 1$. We are left with the $i$ such that $x_i \in \sset{x + \sfrac{1}{2}, x + 1}$, and $i$'s bundle in $A^0$ is heavy-only. Let $S' \subseteq S$ be the set of such $i$, and assume $S' \not= \emptyset$. For $i \in S'$, we have $x_i' \le x_i - s < x$. So, any bundle in $B$ of value $x + 1$ or more is heavy-only. Let $i_0 \in S$, and let $g\in A_{i_0} \cap G$. For $i \in S'$, $A_i$ contains exactly one item in $G$ and $B_i^H = C_i^H$, as $x'_i < x - 1$ otherwise. We cannot have that $S'$ comprises all agents, as there is a light good, and bundles $B_i$, $i \in S'$, are heavy-only.

We next show that $B$ is sub-optimal. We start with $C$ and move any light item in a bundle in $S'$ to a bundle outside $S'$. We then have $C_i = B_i$ for $i \in S'$. We next re-optimize the bundles outside $S'$ to obtain $B$. Since the NSW of $C$ is worse than the NSW of $A^0$, and since in $B$ all bundles of value $x + 1$ or more are heavy-only, the re-optmization must create an agent $h$ that owns a bundle of value $x + \sfrac{1}{2}$ containing a light good. Let $j$ be the agent to which $g$ is allocated in $B$. Then $x'_j \le x + \sfrac{1}{2}$. We move a light good from $h$ to $j$, and we move $g$ from $j$ to $i_0$ and reconvert it to a heavy good. The multiplicative change in NSW is at least 
  \[      \frac{x - \sfrac{1}{2}}{x + \sfrac{1}{2}}\cdot \frac{x + 1}{x - \sfrac{1}{2}} = \frac{x + 1}{x + \sfrac{1}{2}}, \]
  and, hence, the change improves NSW, a contradiction to the optimality of $B$. 
  \end{proof}

  \begin{corollary}\label{only small implies optimality} Let $x$ be the minimum value of any bundle in $A^0$. If $A^0$ contains no bundle of value more than $x + 1$, $A^0$ is optimal. \end{corollary}
  \begin{proof} If $A^0$ is not optimal, $S \not= \emptyset$. Finally, $x_i > x + 1$ for $i \in S$ by Lemma~\ref{conversion lemma}.\end{proof}

  %In $B$ we have bundles of value $\sset{x_B, x_B + \sfrac{1}{2}, x_B + 1}$ and then heavy-only bundles of value $k_B s$, $(k_B + 1)s$, \ldots, $m_B s$, where $k_B = \argmin_k ks > x_B + 1$ and $m_B s$ is the maximum value of any bundle in $B$. If there are no bundles of value larger than $x + 1$ in $B$, $k_B$ and $m_B$ do not exist. Agents in $S$ own bundles of value $m_B s $ or $(m_B - 1)s$ and $x_B \le m_B - 1$. 

Recall that $k_0 = \argmin_k ks > x+1$, and that for $k \ge k_0$, $R_k$ denotes the agents owning exactly $k$ heavy items in $A^0$ and to which no heavy item can be pushed from $S_{> k}$, and that $R'_k$ denotes the agents owning exactly $k - 1$ heavy items in $A^0$ to which a heavy item can be pushed from $R_k$. 

\begin{lemma}\label{further properties of G} Let $k_1$ be such that $(k_1 - 1)s = \min_{i \in S} x'_i$. Then, $k_1 \ge k_0$, and $R_{\ge k_1 + 1} \subseteq S \subseteq R_{\ge k_1}$. Also, $B_i = C_i^H \setminus G$ for $i \in S$. \end{lemma} 
\begin{proof} Since no heavy item in a bundle of value $x +1$ or less is converted, we have $k_1 \ge k_0$. Furthermore, by Lemma~\ref{conversion lemma}, $x'_i \in \sset{(k_1 - 1)s, k_1s}$ for all $i \in S$. So, $S \subseteq R_{\ge k_1}$. Then, $G$ contains $\ell - k_1$ heavy items in any bundle in $R_\ell$ for $\ell > k_1$, and
the remaining $r = \abs{G} - \sum_{\ell > k_1} (\ell - k_1) \abs{R_\ell}$ heavy items in $R_{\ge k_1}$. After the conversion, all bundles in $R_{\ge k_1}$ contain at most $k_1$ heavy items, and at least $r$ of them contain at most $k_1 - 1$ heavy items. 

We next show $B_i^H = C_i^H$ for $i \in S$. Assume, for the sake of a contradiction, that there is an $i \in S$ with $C_i^H \setminus B_i^H \not= \emptyset$. Since $(k-1)s \le x'_i < x_i - \abs{A_i \cap G}s = ks$, we have $x'_i = (k - 1)s$. As above, we conclude that $B^H \oplus C^H$ contains no cycle and, hence, decomposes into paths. Then, there is a $B^H \oplus C^H$ alternating path $P$ connecting $i$ and $h$, starting with a $C$-edge in $i$ and ending with a $B$-edge in $h$. Furthermore, the heavy degree of $h$ in $B$ is larger than its heavy degree in $C$. Since $B_h^H \not\subseteq C_h^H$, we have $h \not\in S$, and, hence, $h \in R_{\le k_1}$. Write $R_{\le k_1} = (R_{< k_1} \setminus R'_{k_1}) \cup R'_{k_1} \cup R_{k_1}$. 

We cannot have $h \in R_{< k_1} \setminus R'_{k_1}$, as agents in $R_{< k_1} \setminus R'_{k_1}$ consider goods owned by agents in $R'_{k_1} \cup R_{\ge k_1}$ light. More precisely, trace $P$ from $i$ until an agent $h' \in R_{< k_1} \setminus R'_{k_1}$ is reached. The good $g'$ preceding $h'$ is owned by an agent in $R'_{\ge k_1} \cup R_{\ge k_1}$ in $C$ and, hence, is considered light by $h'$. 

We cannot have $h \in R_{k_1}$, as then $x_h' \ge (k_1 + 1)s$, a contradiction. 

So, assume $h \in R'_{k_1}$. Then, $x'_h \ge ks$. Consider the allocation $D$ obtained by augmenting $P$ to $B$. The heavy degree of $i$ increases to $k$ and the heavy degree of $h$ decreases to at most $k - 1$. Thus, the NSW does not decrease, and $D$ is closer to $C$ than $B$, a contradiction to the choice of $B$. 

Since $B_i$ is heavy-only, $B_i^H = C_i^H$ implies $B_i = C_i^H$.  \end{proof}

\begin{lemma}\label{heaviest contributes to G} Either $G = \emptyset$ or for every $i \in N$ such that $v_i(A_i)$ is maximal and for every $g \in A_i$ there exists an optimal $G$ such that $g \in G$. \end{lemma}   

\begin{proof} We may assume $G \ne \emptyset$. Let $i \in N$ be such that $v_i(A_i)$ is maximal and $g \in A_i$. We may assume $g \notin G$. 

Assume first that $x'_i < x_i$ and $i$ is incident to a $C$-edge in $C^H \oplus B^H$. Let $P$ be a maximal alternating path starting at $i$ with an edge in $C^H \setminus B^H$. Since every good has even degree in $C^H \oplus B^H$, $P$ ends at an agent $k$ with a $B$-edge. Then, $k \not\in S$ by Lemma~\ref{conversion lemma} and, by Claim 5 from Section 4.2.2, $A_k$ contains at most one heavy good less than $A_i$, so $x'_k \ge x_i > x'_i$. We augment $P$ to $B$. If $x'_k \ge x'_i + s$, the NSW of $B$ does not decrease. Otherwise, $i$ has $\ceil{x'_i - (x_i-s)}$ light goods that we give to $k$. Now, $i$ has value $x'_i + s - \ceil{s + x'_i - x_i} = x'_i + (x_i - x'_i -\delta)$, and $k$ has value $x'_k - s + \ceil{s + x'_i - x_i} = x'_k - (x_i - x'_i - \delta)$, where $\delta \in \sset{0,\sfrac{1}{2}}$. Since the sum of their utilities does not change and $0 \le x_i - x'_i - \delta \le x_k' - x'_i$, the NSW of $B$ does not decrease. But $B$ moves closer to $C$, a contradiction to the choice of $B$. So, we have either $x'_i \ge x_i$ or $C^H \subseteq B^H$. 

Assume that $i \notin S$. If $x'_i < x_i$, then $i$ is incident to a $C$-edge in $C^H \oplus B^H$, a case we have already excluded. Hence, $x'_i \ge x_i$. Since $G \ne \emptyset$, there is a $j \in S$. We have $x'_i \ge x_i = v_i(A_i) \ge v_j(A_j) \ge x'_j + s$. By Lemma 9.a), $B_i$ is heavy-only. Let $g' \in B_i$, $h \in C_j \cap G$ and $j'$ the owner of $h$ in $B$. We convert $g'$ to a light good and give it to $j'$ and give $h$ back to $j$ as a heavy good. This does not decrease the NSW of $B$, does not change the size of $G$, and yields an allocation at least as close to $C$ as $B$, so we may assume that $i \in S$. 

If $i \in S$, $x'_i < x_i$, and, hence, $C^H \subseteq B^H$. Thus, $g \in B_i$, so we may exchange $g$ with any good in $A_i \cap G$, and the claim holds.
\end{proof}

It is now easy to complete the proof of Theorem~\ref{HalfIntegerResult}. We start with $A^0$. If there is no bundle of value more than $x(A^0) + 1$, $A^0$ is optimal. Otherwise, we select any bundle of maximal value and any good in this bundle, convert the good to a light good and re-optimize to obtain $A^1$. If there is no bundle of value more than $x(A^1) + 1$, we stop. Otherwise, we select a bundle of maximal value and any good in this bundle, convert the good to a light good and re-optimize to obtain $A^2$. We continue in this way and then select the best allocation among $A^0$, $A^1$, \ldots\;.

\paragraph{Polynomial Time:} We construct iteratively allocations $A^1$, $A^2$, \ldots\;. Each time, we convert a heavy good to a light good and re-optimize. There are at most $m$ conversions and each re-optimization takes polynomial time $O(n^4m^{\sfrac{3}{2}})$ by Lemma~\ref{running time}. Thus, the overall time is polynomial.

\section{\classNP-Hardness when $q\ge 3$}\label{sec:NPhard}

In this section, we complement our positive results on polynomial-time NSW optimization. In particular, we show: 

\NPhard*

We provide a reduction from the NP-hard \emph{$q$-Dimensional-Matching ($q$-DM)}.
Given a graph $G$ consisting of $q$ disjoint vertex sets $V_1, \dots, V_q$, each of size $n$, and a set $E\subseteq V_1\times\dots\times V_q$ of $m$, $m \ge n$, edges, decide whether there exists a perfect matching in $G$ or not. Note that for $q=3$ the problem is the well-known 3-DM and, thus, \classNP-hard. \classNP-hardness for $q>3$ follows by simply copying the third set of vertices in the 3-DM instance $q-3$ times, thereby also extending the edges to the new vertex sets.

\paragraph{Transformation:} 
There is one good for each vertex of $G$, call them \emph{vertex goods}. Additionally, there are $p(m-n)$ \emph{dummy goods}. For each edge of $G$, there is one agent who values the $q$ incident vertex goods $\sfrac{p}{q}$ and all other goods $1$. 

\begin{lemma}\label{lemma:NPh1}
    If $G$ has a perfect matching, then there is an allocation $A$ of the goods with $\NSW(A)=p$.  If $G$ has no perfect matching, then for any allocation $A$ of goods,  $\NSW(A) < p$.
\end{lemma}
\begin{proof}
Suppose there exists a perfect matching in $G$. We allocate the goods as follows: Give each agent that corresponds to a matching edge all $q$ incident vertex goods. Now there are $m-n$ agents left. Give each of them $p$ dummy goods. As each agent has utility $p$, the \NSW\ of this allocation is $p$ as well.

For the second claim, assume there is an allocation $A=(A_1,\dots,A_m)$ of goods with $\NSW(A)\geq p$. We show that, in this case, there is a perfect matching in $G$.
First, observe that if we allocate each good to an agent with maximal value for it, we obtain an upper bound on the average utilitarian social welfare of $A$, i.e., $\sfrac{1}{m}\sum_i v_i(A_i) \leq \sfrac{1}{m}(qn\cdot \sfrac{p}{q} + p(m-n)) = p$. 
Applying the AM-GM inequality gives us $\NSW(A)=\left( \prod_i v_i(A_i) \right)^{1/m}\leq p$, and, furthermore, $\NSW(A) = p$ iff $v_i(A_i) = p$ for all agents $i$.
Hence, each agent's utility is $p$ in $A$, and each vertex good is allocated to an incident agent. The next claim allows us to conclude that there are only two types of agents in $A$:

\begin{claim}\label{claimNPh}
    If an agent $i$ has valuation $v_i(A_i) = p$, then she either gets her $q$ incident vertex goods or $p$ other goods.
\end{claim}
\begin{proof} Let $i,j \in \NN_0$ be such that $p = i\cdot \sfrac{p}{q} + j$. Then, $j \le p$, since $i \ge 0$, and $(p -j) q = ip$. Since $p$ and $q$ are co-prime, $p$ divides $j$. Thus, either $j = 0$ and $i = q$, or $j = p$ and $i = 0$. \end{proof}

By Claim~\ref{claimNPh}, an agent either receives $0$ or $q$ of its vertex goods. As there are $qn$ vertex goods, and each of them must be given to an incident agent, there must be $n$ agents receiving their $q$ incident vertex goods, which implies that there is a perfect matching in $G$.
%
%Let $b$ be the number of agents receiving their $q$ incident vertex goods in $\mathcal{A}$. Then $m-b$ is the number of agents receiving $q$ other goods. Since each vertex good must be allocated to an incident agent, $bp=qn\cdot \sfrac{p}{q}$ and thus $b=n$. Hence there must be $n$ agents receiving their $p$ incident vertex goods, which implies that there is a perfect matching in $G$.
\end{proof}

Lemma~\ref{lemma:NPh1} yields the proof of Theorem \ref{NPhard}.

\section{Conclusions}

We have delineated the border between tractable and intractable 2-value NSW instances. If goods have weights $1$ and $\sfrac{p}{q}$ with co-prime $p$ and $q$, instances with $q = 1$ and $q = 2$ are tractable, and for $q \ge 3$ the problem of maximizing the NSW is \classNP-complete. We suggest three  directions for further research.

Our algorithms, in particular for the half-integral case, are fairly complex. Find simpler algorithms.

Find a succinct certificate of optimality in the spirit of McConnell et al.~\cite{McConnell2010}, i.e., can one, in addition, to the optimal allocation compute a succinct and easy-to-check certificate that witnesses the optimality of the allocation. Such certificates are available for many generalized matching problems. See for example the book by Akiyama et al.~\cite{Akiyama-Kano}. An early example is Tutte's certificate for the non-existence of a perfect matching, Tutte~\cite{TutteFactor}. An undirected graph $G$ has no perfect matching if and only if there is a subset $U$ of the vertices such that $\mathit{odd}(G - U) > \abs{U}$, where $\mathit{odd(}G - U)$ is the number of connected components of $G - U$ of odd cardinality. Of course, a perfect matching witnesses the existence of a perfect matching.

In Akrami et al.~\cite{Akrami_Chaudhury_Hoefer_Mehlhorn_Schmalhofer_Shahkarami_Varricchio_Vermande_Wijland_2022} a 1.0345 approximation algorithm is given, and APX-hardness is shown for $q \ge 4$. APX-hardness for $q = 3$ is shown in~\cite{fitzsimmons2024hardness}. It is not known whether the approximation factor is best possible.

\paragraph{Acknowledgement:} We want to thank the anonymous reviewers for their thorough and constructive reviews. It is a pity that we cannot name them. 

% \begingroup \parindent 0pt \parskip 4ex
% \def\enotesize{\normalsize} 
% \theendnotes
% \endgroup

\renewcommand{\htmladdnormallink}[2]{#1}
\bibliographystyle{informs2014}
\bibliography{bibliography}

\end{document}